%% file: joint.tex
\documentclass[10pt,twocolumn,twoside]{IEEEtran}

\usepackage{amsmath, amssymb, amsthm}
\usepackage{mathtools}
\usepackage{hyperref}
\usepackage{bm}
\usepackage{booktabs}
\usepackage{enumitem}
\usepackage{xcolor}
\usepackage{algorithm}
\usepackage{algorithmic}
\usepackage{cite}
\usepackage{graphicx}
\usepackage{multirow}
\usepackage{array}

\input{math_commands}

\providecommand{\vpsi}{\bm{\psi}}
\providecommand{\vbeta}{\bm{\beta}}
\providecommand{\vz}{\bm{z}}
\providecommand{\mTheta}{\bm{\Theta}}

\providecommand{\mK}{\bm{K}}
\providecommand{\mW}{\bm{W}}
\providecommand{\mS}{\bm{S}}
\providecommand{\mU}{\bm{U}}
\providecommand{\gM}{\mathcal{M}}
\providecommand{\gR}{\mathcal{R}}
\providecommand{\gZ}{\mathcal{Z}}
\providecommand{\mPhi}{\bm{\Phi}}

\begin{document}

\title{Differentially Private Synthetic Voltage Phasor Release for Distribution Grids}

\author{%
Andrew Campbell, Chenyue Zhang, Anna Scaglione, Eli Kerr, Merilyn Chesler, and Sean Peisert
\thanks{A.~Campbell, C. Zhang, and A.~Scaglione are with the Department of Electrical and Computer Engineering, Cornell Tech, Cornell University, New York, NY 10044, USA (e-mails: \{ac2458, cz563, as337\}@cornell.edu). Eli Kerr and Merilyn Chesler are with Kevala Inc. S.~Peisert is with Computing Sciences Research, Lawrence Berkeley National Laboratory, Berkeley, CA 94720, USA (e-mail: sppeisert@lbl.gov).}
\thanks{This research was supported in part by the Director, Cybersecurity, Energy Security, and Emergency Response (CESER) office of the U.S.\ Department of Energy, under contract DE-AC02-05CH11231.}
}

\maketitle

\begin{abstract}
Training machine learning models, including Grid Foundation Models (GFMs), requires large volumes of realistic grid data, yet substantial privacy concerns discourage utilities and data providers from sharing load profiles and network parameters. We study the release of synthetic voltage phasor trajectories for distribution grids under differential privacy (DP). We first fit a DP generative model to historical customer loads, then propagate synthetic load trajectories through the AC power flow equations on the true admittance matrix to produce voltage phasors. The central question is whether the randomness already present in the DP synthetic loads is sufficient to protect not only the loads, but also the network topology encoded by the bus admittance matrix. We show that it is. 
The implication is that a corpus of voltage trajectories can be constructed from DP synthetic loads while preserving the statistics of AC power flow, which is critical for training GFMs. This preservation of the power flow statistics stands in contrast to approaches that perturb the admittance matrix directly or inject noise into the voltage outputs, both of which distort the underlying physics.
Concretely, we derive $(\varepsilon,\delta)$-DP guarantees for the released voltage trajectories with respect to the admittance matrix, meaning privacy of the network parameters is obtained without any additional noise mechanism. Our bound depends on the adjacency assumption, the Jacobian of the AC power flow, and the covariance of the synthetic DP-loads. Finally, we present a synthetic voltage generation procedure and an empirical evaluation against Gaussian output-perturbation baselines, demonstrating that our approach provides a clear advantage for enabling GFM training.
\end{abstract}
\section{Introduction}
\label{sec:intro}

The electrification of transportation, the rapid deployment of distributed energy resources (DERs), and the increasing complexity of distribution-grid operations are driving unprecedented demand for data-driven methods in power systems.
Machine learning (ML) models trained on power flow data now support state estimation, fault detection, voltage regulation, load forecasting, and optimal power flow (OPF) approximation~\cite{meiser2024survey,shukla2025ai}.
More ambitiously, the community has begun exploring \emph{Grid Foundation Models} (GFMs)~\cite{Hamann2024FoundationModels}: large-scale pre-trained backbones for diverse grid tasks, which depend on massive heterogeneous datasets spanning load profiles, voltage trajectories, generation patterns, and network topologies across many feeders and operating conditions.
Assembling such datasets, however, confronts a fundamental tension: \emph{the very data that would make GFMs most useful are precisely those that utilities are least willing to share}.

For distribution systems operators (DSOs), two categories of data are confidential:
(1)~load consumption data from advanced metering infrastructure (AMI) sensors, and
(2)~system parameters in the bus admittance matrix~$\mY$.
Voltage phasor measurements depend on both, so their release can let an adversary infer individual consumption behaviors or reconstruct the admittance matrix~\cite{deka2020topology, liao2019urban, deka2023tutorial, yuan2023inverse}.
Bus generation, by contrast, is generally treated as public~\cite{mak2020privacy, sandberg2015differentially, bibi2025comprehensive} given the availability of irradiance measurements and public PV capacity registrations, a convention we adopt throughout. The goal for GFMs is therefore to develop privacy procedures that protect both the load data and the admittance matrix.  

\begin{figure}
    \centering
    \includegraphics[width=0.76\linewidth]{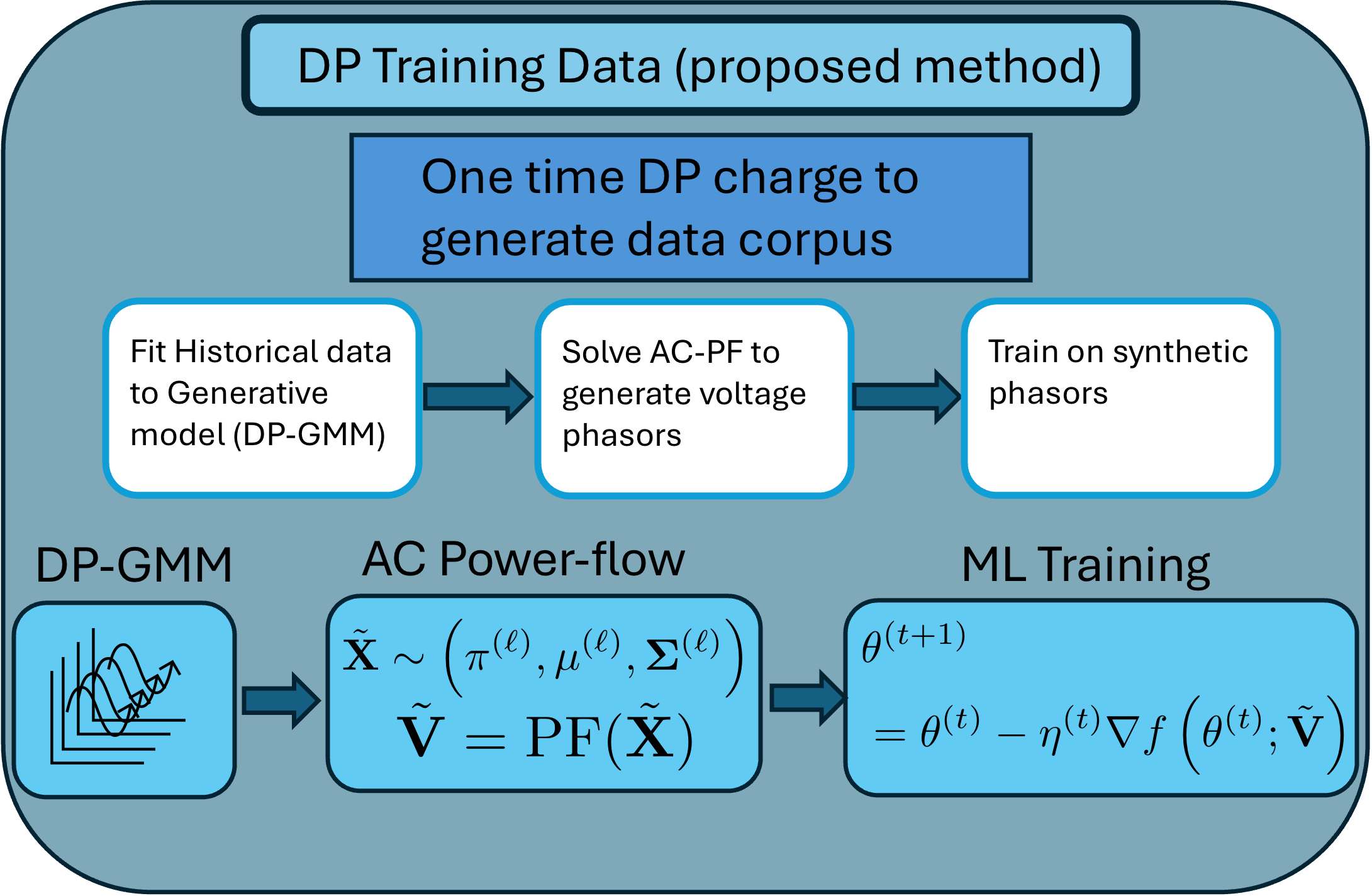}
    \includegraphics[width=.76\linewidth]{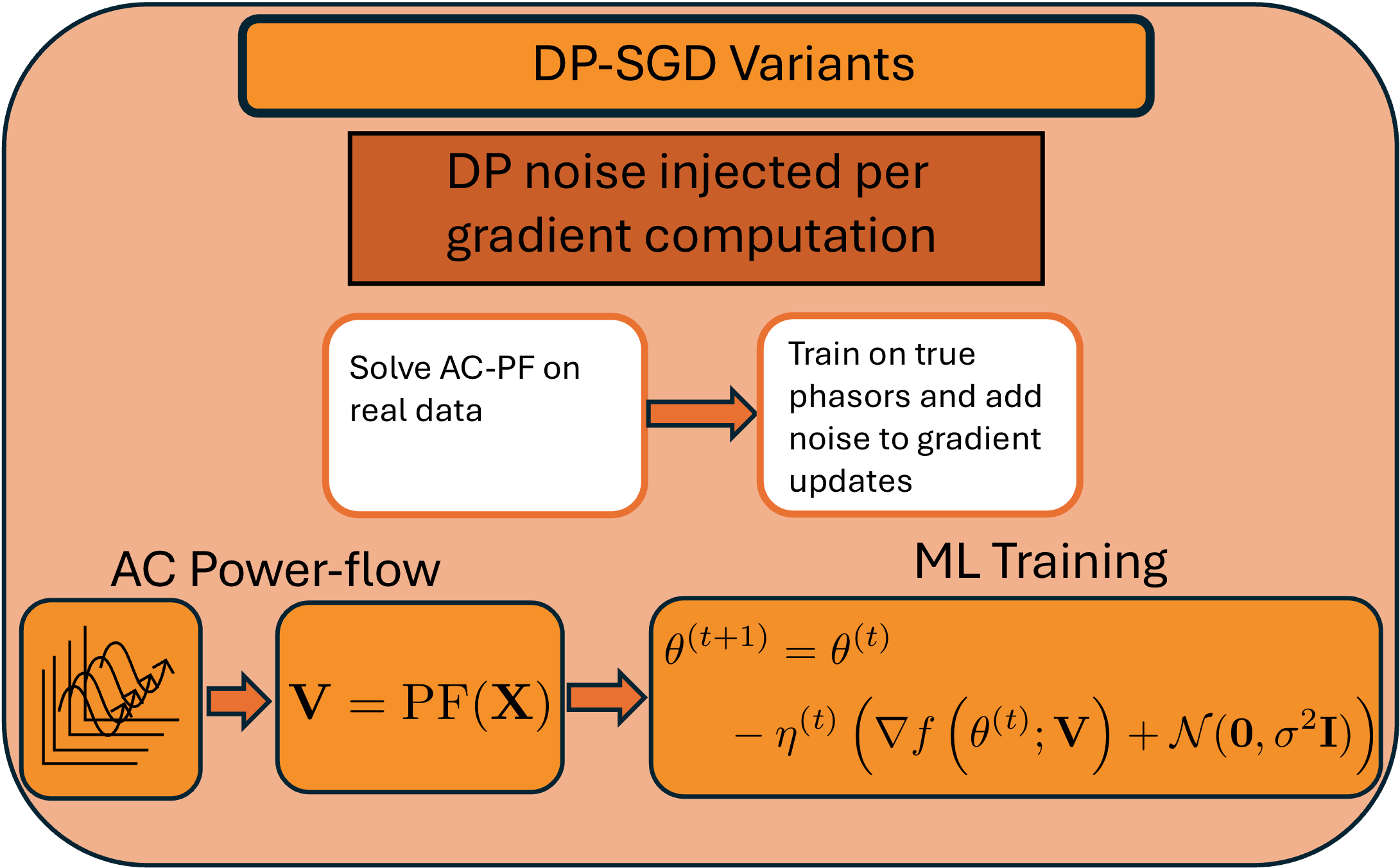}
    \caption{Two DP Paradigms for GFMs. Top is the proposed DP-GMM methodology. Bottom is the standard DP-SDG approach for model training. }
    \label{fig:dpSyn_vs_Dpsgd}
\end{figure}
\begin{table*}[t]
\centering
\caption{Advantages and disadvantages of the two DP paradigms for Grid Foundation Models illustrated in Fig.~\ref{fig:dpSyn_vs_Dpsgd}.}
\label{tab:paradigms}
\scriptsize
\renewcommand{\arraystretch}{1.05}
\setlength{\tabcolsep}{3pt}
\begin{tabular}{@{}>{\raggedright\arraybackslash}p{0.10\linewidth}>{\raggedright\arraybackslash}p{0.46\linewidth}>{\raggedright\arraybackslash}p{0.40\linewidth}@{}}
\toprule
 & \textbf{DP training data (proposed)} & \textbf{DP-SGD variants} \\
\midrule
Advantages
  & (1)~Protects $\mY$ and loads from one mechanism (Thm.~\ref{thm:approx_dp}); (2)~voltages are physics-consistent; (3)~budget spent once at generation, so downstream training is free under post-processing; (4)~DP-GMM preserves data statistics; (5)~task-agnostic corpus.
  & (1)~Simple to implement; (2)~no parametric assumption on loads. \\
\addlinespace[2pt]
Disadvantages
  & (1)~DP-GMM is non-trivial to implement; (2)~utility bounded by DP-GMM fit quality.
  & (1)~Budget composes over iterations; (2)~i.i.d.\ noise destroys correlation; (3)~sample-level guarantee requires lossy group-privacy reduction for $\mY$; (4)~output is a trained model, not a reusable dataset. \\
\bottomrule
\end{tabular}
\renewcommand{\arraystretch}{1.0}
\end{table*}
\subsection{Synthetic Data as the Grid AI Privacy-Preserving Enabler}

Synthetic data generation has emerged as a compelling solution to the data-scarcity and privacy bottleneck~\cite{meiser2024survey,gillioz2025large,turowski2024review}, and recent reviews highlight both the field's rapid expansion and open challenges around standardized evaluation and application-specific realism~\cite{turowski2024review}.
By fitting a generative model to real data and releasing samples from the model rather than the data itself, one can produce arbitrarily large training corpora while limiting disclosure of sensitive information.
Approaches explored so far include Generative Adversarial Networks (GANs) for smart-grid time series~\cite{zhang2018generative}, physics-informed synthetic datasets for transmission grids~\cite{gillioz2025large}, and AI-based techniques for fault classification~\cite{shukla2025ai}.

A critical shortcoming of these approaches is the lack of \emph{formal privacy guarantees}.
Differential privacy (DP)~\cite{dwork2014algorithmic} provides a principled framework for bounding information leakage, but incorporating it into synthetic data pipelines for grid applications remains challenging, and recent surveys emphasize that strong formal privacy with downstream utility is especially difficult in domain-specific settings~\cite{schlegel2025formalprivate}.
The literature is fragmented: some works generate realistic synthetic data with no privacy protection; others add i.i.d.\ noise to real data to achieve DP at the cost of destroying temporal correlations essential for ML training~\cite{gergelyDream, eibl2017differential, barbosa2014lightweight}. Still others train generative models with DP-SGD~\cite{abadi2016deep, huang2022dpwgan}, inheriting the well-known utility degradation of gradient-level perturbation. 
Recent constrained generative approaches improve the physical realism of synthetic power flow data but do not provide topology-level privacy guarantees~\cite{hoseinpour2025constrained}.
See Figure \ref{fig:dpSyn_vs_Dpsgd} for an overview of the DP synthetic data corpus vs DP-SGD for machine learning (ML) tasks. Additionally, in Table \ref{tab:paradigms} we provide the advantages and disadvantages of each approach. 

Prior work has developed along three separate threads: synthetic grid-data generation~\cite{meiser2024survey,turowski2024review}, privacy-preserving smart-grid data release~\cite{bibi2025comprehensive}, and topology/state inference or obfuscation under power flow models~\cite{deka2023tutorial,yuan2023inverse,fioretto2019differential,mak2020privacy}. The closest prior work, and the only one to bridge these threads, is~\cite{dvorkin2023differentially}, which generates DP synthetic power-system datasets via Laplace and exponential mechanisms with convex post-processing to restore downstream-task feasibility.
Their two algorithms protect wind power measurements (for regression) and DC-OPF transmission capacity vectors (for feasibility across OPF models). Our work differs in the protected object and the setting: (i) we protect \emph{loads} and the \emph{full admittance matrix} $\mY$ rather than wind generation or line capacity ratings; (ii) we operate on the \emph{AC} power flow manifold rather than on DC-OPF; and (iii) we retain the true $\mY$ during power flow computation and release voltage phasor trajectories, whereas~\cite{dvorkin2023differentially} releases synthetic network parameters consumed by a separate downstream OPF solve. 
By propagating a DP generative model for loads through the true AC power flow equations, our approach yields physics-consistent voltage trajectories that protect both consumer loads and the admittance matrix with no additional output perturbation.

Table~\ref{tab:comparison} synthesizes the state of the art and highlights the unique position of the present work.
To the best of our knowledge, no prior approach simultaneously achieves all of the following:
\begin{enumerate}[label=(\roman*)]
    \item releases data from a simple, interpretable generative model with strong $(\varepsilon,\delta)$-DP guarantees directly on the model parameters;
    \item minimizes the divergence between the privatized and true data distributions, thereby bounding the degradation of downstream ML models trained on the synthetic data;
    \item enables the release of \emph{physics-consistent system states} (voltage phasors) computed on the \emph{true} network model, without additional output noise; and
    \item provides formal DP guarantees on the system topology ($\mY$) as a byproduct of the same noise mechanism that protects the loads.
\end{enumerate}

Our pipeline builds on two pieces of our prior work: \cite{ravi2022differentially} developed differentially private $K$-Means clustering for AMI data, and~\cite{liu2025differentially} generalized this to a full Gaussian Mixture Model (DP-GMM) that adds calibrated noise to the means, covariances, and mixture weights while minimizing the KL divergence between the privatized and true distributions subject to the $(\varepsilon,\delta)$-DP constraint. 
The GAN-based approach of~\cite{huang2022dpwgan} offers an alternative generative mechanism compatible with our framework, but it trains via DP-SGD~\cite{abadi2016deep}, which provides weaker control over distribution fidelity.
\begin{table*}[t]
\centering
\caption{Comparison of approaches for privacy-preserving synthetic data release in power systems.}
\label{tab:comparison}
\footnotesize
\begin{tabular}{lccccc}
\toprule
\textbf{Approach} & \textbf{Load DP} & \textbf{Topology DP} & \textbf{Distrib.\ fidelity} & \textbf{Unlimited load samples} & \textbf{True $\mY$ in PF} \\
\midrule
i.i.d.\ noise on loads~\cite{gergelyDream,eibl2017differential,barbosa2014lightweight}
  & $\checkmark$ & \texttimes & \texttimes & \texttimes & N/A \\
DP federated learning~\cite{fernandez2022privacy}
  & $\checkmark$ & \texttimes & \texttimes & \texttimes & N/A \\
DP-SGD GAN (DPWGAN)~\cite{huang2022dpwgan}
  & $\checkmark$ & \texttimes & \texttimes$^*$ & $\checkmark$ & N/A \\
Noise on $\mY$ entries~\cite{fioretto2019differential,smith2021realistic}
  & \texttimes & $\checkmark^\dagger$ & \texttimes & \texttimes & \texttimes \\
Noise on OPF outputs~\cite{dvorkin2020differentially}
  & \texttimes & \texttimes & \texttimes & \texttimes & $\checkmark$ \\
Bilevel obfuscation~\cite{mak2019privacy}
  & $\checkmark$ & \texttimes & \texttimes & \texttimes & $\checkmark$ \\
DP $K$-Means clustering~\cite{ravi2022differentially}
  & $\checkmark$ & \texttimes & \texttimes & $\checkmark$ & N/A \\
DP-GMM (KL-optimal)~\cite{liu2025differentially}
  & $\checkmark$ & \texttimes & $\checkmark$ & $\checkmark$ & N/A \\
DP synthetic grid datasets~\cite{dvorkin2023differentially}
  & \texttimes & \texttimes$^\ddagger$ & $\checkmark$ & \texttimes & \texttimes \\
\midrule
\textbf{This work} (DP-Powerflow)
  & $\checkmark$ & $\checkmark$ & $\checkmark$ & $\checkmark$ & $\checkmark$ \\
\bottomrule
\end{tabular}

\vspace{2pt}
{\footnotesize $^*$GAN fidelity is implicit via adversarial training but not measured via a closed-form divergence bound.\\
$^\dagger$Protects $\mY$ under \emph{per-parameter metric DP} rather than the $r$-adjacency used here (Definition~\ref{def:dual_adjacency}); see Remark~\ref{rem:Y_noise_infeasible}.\\
$^\ddagger$Provides DP on a scalar DC-OPF transmission capacity vector $\bar{f}$ under entry-wise $\alpha$-adjacency; the DC power-transfer distribution matrix (encoding graph topology and line susceptances) is held fixed and released in the clear, so no formal DP guarantee applies to topology or impedances.}
\end{table*}

\subsection{From Private Loads to Private Grid States: Two Birds with One Stone}

We address the privacy challenge through a single mechanism that protects loads and voltage phasors simultaneously by: (a)~fitting a DP generative model to the loads (DP-GMM), (b)~sampling synthetic loads from it, and (c)~solving AC power flow on the \emph{true} admittance matrix~$\mY$ with those synthetic loads. The key insight is that the privacy noise in the synthetic loads \emph{propagates through the power flow equations} and induces sufficient randomness in the released voltage phasors to mask~$\mY$, so no additional noise is required on the output voltages.

This ``two birds with one stone'' property has profound implications for GFMs. A utility or data aggregator, can release from a single privacy mechanism both differentially private load profiles and the corresponding voltage phasor trajectories. These voltage phasor trajectories are computed by solving power flow on the true system model with public PV generation profiles. The released voltage data are \emph{physics-consistent} (they satisfy the AC power flow equations on the real network) and \emph{distribution-preserving} (the load distribution is not further corrupted by output noise), making them far more suitable for ML training than data from 
output, or network, perturbation baselines.

This result is a first step toward a broader program. If power flow states can be released privately using synthetic loads on the true grid, then other derived quantities, such as OPF solutions, hosting-capacity assessments, or reliability indices, can also be released using the true system parameters and synthetic inputs, inheriting privacy guarantees from the same DP load model. Formalizing this extension is an important direction for future work.

A naive way to protect both loads and topology would be to first privatize the loads via additive noise~\cite{gergelyDream, eibl2017differential, eibl2018influence, barbosa2014lightweight}, then apply the DP power flow release of~\cite{fioretto2019differential} to protect $\mY$. This benchmark has two shortcomings. First, the load-noising mechanisms of~\cite{gergelyDream, eibl2017differential, eibl2018influence, barbosa2014lightweight} adopt user-level adjacency on individual consumption profiles and say nothing about the admittance matrix, while the adjacency notion of~\cite{fioretto2019differential} is per-parameter metric DP on $\mY$ and does not permit topology changes. In contrast, we consider a Frobenius-ball adjacency on $\mY$ which admits joint perturbations across all entries of $\mY$, capturing genuine topology changes such as line switching rather than just small entry-wise deviations. Second, noising loads and then separately privatizing the induced states compounds distortion, since the two mechanisms inject noise independently at different stages of the power flow. Other DP approaches to grid optimization face related limitations:~\cite{fioretto2019differential} noises line conductances directly;~\cite{mak2019privacy} noises loads and restores feasibility via bilevel optimization; and~\cite{dvorkin2020differentially} noises OPF solutions without releasing loads or states. In contrast, our mechanism obtains formal DP guarantees on the admittance matrix \emph{directly from the noise in the DP synthetic loads}, with no additional perturbation of the power flow outputs. To our knowledge, this work is the first result of its kind.
\subsection{Contributions}
This paper makes the following contributions:
\begin{itemize}
[leftmargin=*,topsep=2pt,itemsep=1pt]
    \item \textbf{Joint DP for loads and topology from a single mechanism.} We show that the randomness in DP synthetic loads, propagated through the power flow equations, suffices to guarantee $(\varepsilon,\delta)$-DP on both the loads and the admittance matrix, with no additional noise on the released voltages.
    \item \textbf{Physically meaningful adjacency for admittance matrices.} We introduce a Frobenius-ball adjacency on $\mY_{\text{full}}$ with a manifold-preservation condition (Definition~\ref{def:dual_adjacency}), which admits joint perturbations across entries---including line switching---rather than fixing the sparsity pattern as in per-parameter metric DP.
    \item \textbf{Closed-form $(\varepsilon,\delta)$-DP bound.} We derive a tractable privacy bound (Theorem~\ref{thm:approx_dp}) that decomposes into an injection-likelihood term and a Jacobian-determinant term, each bounded via the Kron amplification factor, the power flow Jacobian, and the load covariance.
    \item \textbf{Synthetic voltage generation pipeline.} We present Algorithm~\ref{alg:voltage_release} and empirically demonstrate a clear advantage over Gaussian output-perturbation baselines in terms of MSE at matched privacy levels.
\end{itemize}
\section{System Model}
\label{sec:model}

We consider a distribution grid with $N$ buses described by a full complex admittance matrix $\mY_{\mathrm{full}}\in\C^{N\times N}$. We first describe the Kron reduction that eliminates zero-injection buses, since the privacy guarantee is stated in terms of perturbations of $\mY_{\mathrm{full}}$ while the analysis operates on the reduced matrix $\mY$.

\subsection{Kron Reduction and Zero-Injection Buses}
\label{sec:model_kron}
Let $\vv \in \C^N$ denote the vector of complex bus voltage phasors, with $[\vv]_k$ the voltage at bus $k$. Partition the $N$ buses into the set $\gR$ of \emph{retained} buses (those carrying loads or generation, $|\gR|=n$) and the set $\gZ$ of \emph{zero-injection} buses ($|\gZ|=N-n$), with corresponding voltage subvectors $\vv_\gR \in \C^n$ and $\vv_\gZ \in \C^{N-n}$. Block the full admittance matrix accordingly:
\begin{equation}
\label{eq:Yfull_block}
\mY_{\mathrm{full}} = \begin{pmatrix} \mY_{\gR\gR} & \mY_{\gR\gZ} \\ \mY_{\gZ\gR} & \mY_{\gZ\gZ}\end{pmatrix}.
\end{equation}
Since zero-injection buses satisfy, where $s_k$ is the injection at bus $k$, $s_k=0$ for all $k\in\gZ$, the power flow equations at $\gZ$ reduce to the linear system $\mY_{\gZ\gR}\vv_\gR + \mY_{\gZ\gZ}\vv_\gZ = 0$, which, since $\mY_{\gZ\gZ}$ is invertible for any connected network with at least one retained bus, yields
\begin{align}
\label{eq:vZ_recovery}
\vv_\gZ &= -\mY_{\gZ\gZ}^{-1}\mY_{\gZ\gR}\,\vv_\gR \;=:\; \mPhi\,\vv_\gR, \nonumber\\ \mPhi :&= -\mY_{\gZ\gZ}^{-1}\mY_{\gZ\gR}\in\C^{(N-n)\times n}.
\end{align}
Substituting~\eqref{eq:vZ_recovery} into the retained-bus equations gives the exact \emph{Kron-reduced} system
\begin{equation}
\label{eq:Yred_def}
\vs_\gR = \mY\,\vv_\gR, \qquad \mY := \mY_{\gR\gR} - \mY_{\gR\gZ}\mY_{\gZ\gZ}^{-1}\mY_{\gZ\gR}\in\C^{n\times n},
\end{equation}
which is the reduced admittance matrix used throughout.
\begin{remark}
\label{rem:vZ_determined}
Equation~\eqref{eq:vZ_recovery} shows that $\vv_\gZ$ is a deterministic linear function of $\vv_\gR$, so once $\vv_\gR$ is fixed the entire network voltage profile is fixed. That is, the randomness from the DP-GMM propagates to $\vv_\gZ$ through $\mPhi$ and the reduced-system analysis is complete. Moreover, since $|[\vv_\gZ]_k|\le\|[\mPhi]_{k,:}\|_1\cdot\max_j|[\vv_\gR]_j|$ with row sums of $\mPhi$ bounded by construction on radial feeders, $\vv_\gZ$ automatically satisfies ANSI~C84.1 regulation limits whenever $\vv_\gR$ does. No separate treatment of or assumption on $\vv_\gZ$ is needed.
\end{remark}
A perturbation of $\mY_{\mathrm{full}}$ induces a perturbation of the reduced matrix $\mY$, and the induced Frobenius distance on the reduced system can be controlled by a scalar amplification factor $\kappa_{\mathrm{Kron}}$ computable from network data alone.
\begin{definition}[Kron amplification factor]
\label{def:kappa_kron}
For a reference full admittance matrix $\mY_{\mathrm{full}}$ partitioned as in~\eqref{eq:Yfull_block}, define
\begin{equation}
\label{eq:kappa_kron_def}
\kappa_{\mathrm{Kron}} := 1 + 2\,\|\mY_{\gR\gZ}\|_{\mathrm{op}}\|\mY_{\gZ\gZ}^{-1}\|_{\mathrm{op}} + \|\mY_{\gR\gZ}\|_{\mathrm{op}}^2\,\|\mY_{\gZ\gZ}^{-1}\|_{\mathrm{op}}^2.
\end{equation}
\end{definition}
\begin{corollary}
\label{cor:kron_perturbation}
For any two full admittance matrices $\mY_{\mathrm{full}},\mY_{\mathrm{full}}'$ with $\|\mY_{\mathrm{full}}-\mY_{\mathrm{full}}'\|_F < r$, the Kron-reduced matrices $\mY,\mY'$ given by~\eqref{eq:Yred_def} satisfy, to leading order in $\|\mY_{\mathrm{full}}-\mY_{\mathrm{full}}'\|_F$,
\begin{equation}
\label{eq:kappa_bound}
\|\mY - \mY'\|_F \;\le\; \kappa_{\mathrm{Kron}}\,\|\mY_{\mathrm{full}} - \mY_{\mathrm{full}}'\|_F \;<\; \kappa_{\mathrm{Kron}}\,r.
\end{equation}
\end{corollary}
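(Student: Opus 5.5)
We perturb the Schur complement map $\mY_{\mathrm{full}}\mapsto \mY_{\gR\gR}-\mY_{\gR\gZ}\mY_{\gZ\gZ}^{-1}\mY_{\gZ\gR}$ to first order. Write $\Delta:=\mY_{\mathrm{full}}'-\mY_{\mathrm{full}}$ and partition it as in~\eqref{eq:Yfull_block} into blocks $\Delta_{\gR\gR},\Delta_{\gR\gZ},\Delta_{\gZ\gR},\Delta_{\gZ\gZ}$. Each block satisfies $\|\Delta_{\cdot\cdot}\|_F\le\|\Delta\|_F$. Write $\mA:=\mY_{\gZ\gZ}^{-1}$. We assume, as the corollary does, that $\|\Delta\|_F$ is small enough that $\mY_{\gZ\gZ}+\Delta_{\gZ\gZ}$ stays invertible; the paper guarantees this for connected networks. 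Then we have the Neumann expansion
\begin{equation*}
(\mY_{\gZ\gZ}+\Delta_{\gZ\gZ})^{-1}=\mA-\mA\Delta_{\gZ\gZ}\mA+O(\|\Delta\|_F^2).
\end{equation*}

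Substituting this expansion and discarding all terms of order two or higher in $\Delta$ gives
\begin{equation*}
\mY'-\mY = \Delta_{\gR\gR} - \Delta_{\gR\gZ}\mA\mY_{\gZ\gR} - \mY_{\gR\gZ}\mA\Delta_{\gZ\gR} + \mY_{\gR\gZ}\mA\Delta_{\gZ\gZ}\mA\mY_{\gZ\gR} + O(\|\Delta\|_F^2).
\end{equation*}
Each linear term is bounded with the mixed inequality $\|\mB\mC\mD\|_F\le\|\mB\|_{\mathrm{op}}\|\mC\|_F\|\mD\|_{\mathrm{op}}$. The second and third terms are each bounded by $\|\mY_{\gR\gZ}\|_{\mathrm{op}}\|\mA\|_{\mathrm{op}}\|\Delta\|_F$. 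This uses $\|\mY_{\gZ\gR}\|_{\mathrm{op}}=\|\mY_{\gR\gZ}\|_{\mathrm{op}}$, which holds because $\mY_{\mathrm{full}}$ is complex symmetric for networks without phase shifters, so $\mY_{\gZ\gR}=\mY_{\gR\gZ}^{\top}$ and transposition preserves the operator norm. The fourth term is bounded by $\|\mY_{\gR\gZ}\|_{\mathrm{op}}^2\|\mA\|_{\mathrm{op}}^2\|\Delta\|_F$, and the first by $\|\Delta\|_F$.

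Applying the triangle inequality and summing the four bounds gives exactly $\kappa_{\mathrm{Kron}}\|\Delta\|_F$ as in~\eqref{eq:kappa_kron_def}, to leading order. The strict inequality $<\kappa_{\mathrm{Kron}}r$ then follows from $\|\Delta\|_F<r$.

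The main obstacle is making the phrase ``to leading order'' honest. The quadratic remainder contains factors $\|\mA\|_{\mathrm{op}}^2\|\Delta\|_F^2$ and higher. To control it rigorously I would first try the exact identity
\begin{equation*}
(\mY_{\gZ\gZ}+\Delta_{\gZ\gZ})^{-1}-\mA=-(\mY_{\gZ\gZ}+\Delta_{\gZ\gZ})^{-1}\Delta_{\gZ\gZ}\mA,
\end{equation*}
together with the bound $\|(\mY_{\gZ\gZ}+\Delta_{\gZ\gZ})^{-1}\|_{\mathrm{op}}\le\|\mA\|_{\mathrm{op}}/(1-\|\mA\|_{\mathrm{op}}r)$, valid when $\|\mA\|_{\mathrm{op}}r<1$. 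This would give a non-asymptotic version of~\eqref{eq:kappa_bound} with a $\kappa_{\mathrm{Kron}}$ that is inflated by a factor $1+O(r)$. Since the statement only claims a first-order bound, I would state the linearized bound as the result and mention the exact version as a remark.
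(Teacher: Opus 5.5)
The paper states this corollary without proof, so there is nothing to compare against. Your first-order Schur-complement expansion is correct, and it is evidently the computation behind~\eqref{eq:kappa_kron_def}. With $a:=\|\mY_{\gR\gZ}\|_{\mathrm{op}}\|\mY_{\gZ\gZ}^{-1}\|_{\mathrm{op}}$, your four linear terms contribute $1$, $a$, $a$ and $a^2$, which sums to exactly $\kappa_{\mathrm{Kron}}=(1+a)^2$.

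Your symmetry hypothesis $\mY_{\gZ\gR}=\mY_{\gR\gZ}^\top$ is genuinely needed for the paper's formula. Without it, the same argument yields $(1+\|\mY_{\gZ\gZ}^{-1}\|_{\mathrm{op}}\|\mY_{\gR\gZ}\|_{\mathrm{op}})(1+\|\mY_{\gZ\gZ}^{-1}\|_{\mathrm{op}}\|\mY_{\gZ\gR}\|_{\mathrm{op}})$.

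Your non-asymptotic remark matters more than you suggest. The paper later uses~\eqref{eq:kappa_bound} as a hard inequality: in Step~3 of the proof of Proposition~\ref{prop:termI}, and in the bound $\|\mK\|_F\le\alpha$ in the proof of Proposition~\ref{prop:termII}. Carry your resolvent identity through, including the second-order term $\Delta_{\gR\gZ}(\mY_{\gZ\gZ}+\Delta_{\gZ\gZ})^{-1}\Delta_{\gZ\gR}$, and set $m:=\|\mY_{\gZ\gZ}^{-1}\|_{\mathrm{op}}$ with $mr<1$. This gives the exact bound $\|\mY-\mY'\|_F\le\frac{\kappa_{\mathrm{Kron}}+mr}{1-mr}\,\|\Delta\|_F$, which is what those downstream steps actually need.

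Finally, the four blocks of $\Delta$ partition its entries. So instead of bounding each block by $\|\Delta\|_F$, you can apply Cauchy--Schwarz to your four linear bounds. This sharpens the first-order constant to $1+a^2\le\kappa_{\mathrm{Kron}}$.
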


Throughout the remainder of the paper, the symbol $\mY$ (without subscript) always denotes the Kron-reduced admittance matrix~\eqref{eq:Yred_def}, while $\mY_{\mathrm{full}}$ denotes the original unreduced matrix. The voltage vector $\vv_t\in\C^n$ refers to the retained-bus voltages $\vv_{\gR,t}$. Next we introduce the AC power flow model on the reduced system.
\subsection{AC Power Flow}
The nonlinear AC power flow equations relate the complex injection vector $\vs_t \in \C^n$ to voltages and admittances through the map $F_{\mY}:\C^n\rightarrow\C^n$ where:
\begin{equation}
    \label{eq:powerflow}
    \vs_t :=F_{\mY}(\vv_t)= \diag(\vv_t)\left(\bar{\mY}\,\bar{\vv}_t+\bar{\vb}\right),
\end{equation}
with $\vv_t \in \C^n$ denoting the retained-bus voltage trajectory at time $t$ and $\vb \in \C^n$ the constant-current offset induced by the slack bus. Specifically, letting $v_{\mathrm{slack}}\in\C$ denote the (fixed) slack-bus voltage and indexing the retained buses to exclude the slack,
\begin{equation}
\label{eq:b_def}
\vb := [\mY_{\mathrm{full}}]_{\mathrm{retained},\,\mathrm{slack}}\,v_{\mathrm{slack}},
\end{equation}
so $\vb$ is known exactly from the network model and the slack-bus reference.   
Since the privacy analysis requires real-valued Jacobians and determinants, we work with the real representation $\tilde{F}_{\mY} : \R^{2n} \to \R^{2n}$, defined by
\begin{equation}
\label{eq:real_rep}
\tilde{F}_{\mY} : \begin{pmatrix} \Real[\vv_t] \\ \Imag[\vv_t] \end{pmatrix} \mapsto \begin{pmatrix} \Real[\vs_t] \\ \Imag[\vs_t] \end{pmatrix}, \quad \vs_t = F_{\mY}(\vv_t).
\end{equation}
We denote its Jacobian by $\mJ_{\tilde{F}_{\mY}}(\vv) \in \R^{2n \times 2n}$.
Since $\tilde{F}_\mY$ is quadratic in $\vv_t$ and generally non-invertible, we restrict the analysis to a ``good'' voltage set defined in Section~\ref{sec:model_good_volt} on which $\tilde{F}_\mY$ is invertible and $\mJ_{\tilde{F}_{\mY}}$ is well defined.

\subsection{Injection Model}
\label{sec:model_injection}
We adopt the standard net injection convention where generation is positive and load is negative, so the total complex injection decomposes as
\begin{equation}
\label{eq:inj_decomp}
\vs_t = \vs_t^g - \vs_t^p,
\end{equation}
where $\vs_t^p \in \C^n$ is the complex load vector with $\Real\{[\vs_t^p]_k\} \ge 0$ the active load magnitude and $\Imag\{[\vs_t^p]_k\} \ge 0$ the reactive load magnitude. Since active load is well modeled by log-normal distributions~\cite{liu2025differentially,munkhammar2014characterizing,mey2021prediction,kuusela2015practical}, we partition buses into $L \ll n$ load classes $\mathcal{C}_1, \ldots, \mathcal{C}_L$, each representing a consumer type (industrial, residential, etc.). Let $\vp_t\in\R^n$ denote the active load at time $t$. For each bus $k\in\mathcal C_\ell$, the $T$-dimensional log-load vector over the release horizon is jointly Gaussian:
\begin{align}
\label{eq:loadclasses}
\Big(\log [\vp^{\ell}_{1}]_k,\,\ldots,\,\log [\vp^{\ell}_T]_{k}\Big)^\top
\sim \mathcal N\!\big(\bm{\mu}^{(\ell)},\,\bm{\Sigma}^{(\ell)}_{T}\big),\nonumber \\
\forall\, k\in\mathcal C_\ell,\ \ell\in[L],
\end{align}
where $\bm{\mu}^{(\ell)}\in\mathbb{R}^T$ is the time-varying class mean profile and $\bm{\Sigma}^{(\ell)}_{T}\in\mathbb{R}^{T\times T}$ with $\bm{\Sigma}^{(\ell)}_{T}\succ 0$ is the temporal covariance. The log-load vectors $\{(\log [\vp^{\ell}_1]_{k},\ldots,\log [\vp^{\ell}_T]_{k})\}_{k\in\mathcal C_\ell}$ are independent across buses, with common distribution~\eqref{eq:loadclasses}. Each load bus $k \in \gC_{\ell}$ has a fixed power factor angle $\theta^{(\ell)}_{k}\in[-\pi/2,\pi/2]$, so reactive power satisfies
\begin{align}
    [\Imag\{\vs_t^p\}]_k = \tan(\theta_k^{(\ell)})\,[\vp_t^\ell]_k. \label{eq:powerfactor}
\end{align}

Let $\gG\subset\left\{1,\hdots,n\right\}$ denote the buses with Photovoltaic (PV) generation. For each $k\in\gG$, $\abs{[\vs_t^g]_k} = \gamma_k h_t^g$, where $h_t^g \geq 0$ is the public irradiance and $\gamma_k > 0$ is installed capacity. Smart inverters couple the reactive injection to local voltage through a volt-var curve $\phi_k([\vv_t]_k) : \R_+ \rightarrow [-\pi/2, \pi/2]$, giving
\begin{equation}
    \label{eq:voltvar}
    [\vs_t^g]_k = \gamma_k h_t^g \cdot e^{j\phi_k(\abs{[\vv_t]_k})}.
\end{equation}
Letting $\Gamma = \diag(\gamma_k)_{k \in \mathcal{G}}\in\R^{n\times n}$ and $[\vh_t^g(\vv_t)]_k = h_t^g e^{j\phi_k(\abs{[\vv_t]_k})}\in\C$ for $k \in \mathcal{G}$, the combined system becomes the fixed-point problem:\vspace{-.2cm}
\begin{equation}
\label{eq:fixedpoint}
\diag(\vv_t)\,\left(\bar{\mY}\,\bar{\vv}_t+\bar{\vb}\right)
= \overbrace{\Gamma \vh_t^g(\vv_t)}^{\vs_t^g(\vv_t)} \;-\; \vs_t^p.
\end{equation}
Thus~\eqref{eq:fixedpoint} defines the load-to-voltage map only implicitly, and it does not necessarily admit a unique smooth solution everywhere. The next subsection restricts to operating conditions under which this map is well-defined and locally invertible.

\subsection{``Good" Voltage Set}
\label{sec:model_good_volt}
To address the invertibility concerns raised by~\eqref{eq:fixedpoint}, we restrict voltages to a ``good'' set defined as follows.
\begin{definition}[``Good" Voltage Set]
\label{def:S0}
Fix $V_{\min}, V_{\max} > 0$. Define
\begin{align}
\mathcal S_0 := \bigg\{\vv\in\C^n:\ V_{\min}\leq|[\vv]_k|\leq V_{\max},\;\forall\,k\bigg\}.\label{eq:S_0}
\end{align}
\end{definition}
The geometric constraints in $\mathcal S_0$ reflect standard voltage regulation limits (e.g., ANSI~C84.1 specifies $|v_k|\in[0.95,1.05]$~p.u.). However, restricting voltages to $\mathcal S_0$ alone does not guarantee that the power flow map is locally invertible or that the implied loads are physically meaningful. We therefore require the admittance matrix to satisfy the following.
\begin{assumption}[Normal Operating Conditions]
\label{def:feasible_class}
The feeder operates under normal conditions; specifically:

\smallskip
\noindent\textbf{(i) Voltage stability margin.}
The system is not operating at or beyond its voltage stability limit, i.e.\ the power flow Jacobian is nonsingular on $\mathcal{S}_0$:
\[
  \inf_{v\in\mathcal{S}_0,\; Y\in\mathcal{Y}_{\mathrm{feas}}}
  \sigma_{\min}\!\bigl(\mJ_{\tilde F_Y}(v)\bigr) > 0.
\]
This is automatic for any normally loaded feeder under ANSI~C84.1 regulation limits, and by the implicit function theorem is equivalent to local uniqueness of the high-voltage solution branch (i.e. no voltage collapse).

\smallskip
\noindent\textbf{(ii) Rated loading envelope.}
Every retained-bus voltage profile $\vv_\gR\in\mathcal{S}_0$ corresponds to active loads within the design range $[p^{(\ell)}_{\min},\,p^{(\ell)}_{\max}]$ for each load class $\ell$:
\[
  p^{(\ell)}_{\min}
  \;\leq\; \bigl[\hat \vp_\mY(\vv_\gR)\bigr]_k
  \;\leq\; p^{(\ell)}_{\max},
  \qquad \forall\, \vv_\gR\in\mathcal{S}_0,\; k\in\mathcal{C}_\ell,
\]
where $[\hat \vp_\mY(\vv)]_k := \Re([\vs^g(\vv)]_k) - \Re\bigl([\vv]_k (\sum_{j=1}^n \overline{\mY_{kj}}\,\overline{[\vv]_j}+\overline{\vb}_k)\bigr)$ is the implied active load at bus $k$. This excludes both unloaded and overloaded feeders, with $p^{(\ell)}_{\min},p^{(\ell)}_{\max}$ read from the feeder's load study or AMI history. By Remark~\ref{rem:vZ_determined} the zero-injection voltages are controlled automatically.

\smallskip
\noindent\textbf{(iii) IEEE~1547-2018 volt-var compliance (when applicable).}
When smart inverters operate under a volt-var curve $\phi_k$, the combined Jacobian remains nonsingular on $\mathcal{S}_0$: $\inf_{v\in\mathcal{S}_0}\sigma_{\min}(J^{\mathrm{eff}}_{\tilde F_Y}(v)) > 0$. IEEE~1547-2018 requires $|\phi'_k(u)| < 1/(\gamma_k h^g_{\max})$, which automatically enforces this for any certified deployment; absent volt-var coupling, condition~(iii) reduces to~(i).
\end{assumption}

The margins $[p^{(\ell)}_{\min},p^{(\ell)}_{\max}]$ in condition~(ii) serve two purposes: they guarantee that the log-normal likelihood ratio in the privacy analysis is bounded, and they define the support of the truncated log-normal distribution from which synthetic loads are sampled (see Definition~\ref{def:mechanism} and Algorithm~\ref{alg:voltage_release}).

\subsection{Pushed-Forward Voltage Distribution}
\label{sec:model_push}
The net injection $\vs_t$ is random only through the $n$-dim active-load vector $\vp_t$ since the reactive component $\vq_t = \tan(\mTheta)\vp_t$, is a deterministic function of $\vp_t$ via the fixed power factors~\eqref{eq:powerfactor}, and $\vs_t^g(\vv_t)$ is a deterministic function of $\vv_t$ via~\eqref{eq:voltvar}. The distribution of $\vs_t$ on $\R^{2n}$ is therefore supported on an $n$-dim submanifold, and the voltage density must be written as a surface density on the corresponding $n$-dim image in $\R^{2n}$.

Under Assumption~\ref{def:feasible_class}, \eqref{eq:fixedpoint} admits a unique smooth solution $\vv = G_\mY(\vp)$ for each admissible $\vp\in\R^n$, yielding a smooth injective map
\begin{equation*}
G_\mY : \R^n \to \R^{2n}, \quad \gM_\mY := G_\mY(\R^n) \subset \R^{2n},
\end{equation*}
whose image $\gM_\mY$ is the $n$-dim submanifold of $\mY$-admissible voltages. Taking real parts of~\eqref{eq:fixedpoint} and using $\Real[(\mI + j\tan\mTheta)\vp] = \vp$ gives the explicit closed form of the inverse:
\begin{equation}
\label{eq:Ginv_closed_form}
    [G_\mY^{-1}(\vv)]_k = \Real\{[\vs^g(\vv)]_k\} - \Real\bigl\{[\vv]_k[\bar\mY\bar\vv+\bar\vb]_k\bigr\}, \quad k\in[n].
\end{equation}
By the implicit function theorem applied to~\eqref{eq:fixedpoint},
\begin{equation}
\label{eq:DG_ift}
    DG_\mY(\vp) = -\bigl(\mJ^{\mathrm{eff}}_{\tilde F_\mY}(\vv)\bigr)^{-1}\mR, \quad \mR:= \begin{pmatrix}\mI \\ \tan\mTheta\end{pmatrix}\in\R^{2n\times n},
\end{equation}
evaluated at $\vv = G_\mY(\vp)$, where $\mR$ depends only on the power factors.

Let $f_\vp$ denote the joint density of the active log-load vector $\vp$ under the log-GMM model~\eqref{eq:loadclasses}. Then the voltage $\vv = G_\mY(\vp)$, as a random element of $\R^{2n}$ supported on $\gM_\mY$, has density, given by the $n$-dim change-of-variables formula:
\begin{equation}
    \label{eq:pushforward}
    p_{\vv}(\vv \mid \mY) = f_\vp\bigl(G_\mY^{-1}(\vv)\bigr)\,|J_\mY(\vv)|, \quad \vv\in\gM_\mY,
\end{equation}
where the scalar volume factor $|J_\mY(\vv)|$ is the reciprocal square root of the Gramian determinant of $DG_\mY$:
\begin{equation}
\label{eq:Jn_def}
|J_\mY(\vv)| := \sqrt{\det\bigl(DG_\mY^\top DG_\mY\bigr)}\bigg|_{\vp = G_\mY^{-1}(\vv)}.
\end{equation}
Crucially, \eqref{eq:pushforward} rewrites the distribution of the voltage phasors in terms of the distribution of the active loads, giving a procedure for bounding the privacy loss of the voltage phasors w.r.t.\ the admittance matrix using the noise in the DP loads. When volt-var coupling is present, the generation injection depends on the voltage through the volt-var curve and $\mJ_{\tilde{F}_{\mY}}$ must be replaced by an effective Jacobian $\mJ^{\mathrm{eff}}_{\tilde{F}_{\mY}}$ throughout. Since this substitution does not impact the logic of the analysis, we proceed w.l.o.g.\ with $\mJ_{\tilde{F}_{\mY}}$ and defer the details to Appendix~\ref{app:Mtilde_closed_form_proof}.

\section{Privacy Preliminaries}
\label{sec:privacy}
We introduce the threat model, DP definitions and adjacency, and then the privacy mechanism.
\subsection{Threat Model}
\label{sec:threat_model}
We adopt the standard DP threat model used in prior work on privacy-preserving grid data release~\cite{fioretto2019differential,smith2021realistic,dvorkin2020differentially,mak2020privacy,mak2019privacy,sandberg2015differentially}: the adversary is computationally unbounded, may possess arbitrary auxiliary information about loads, generation, and network topology, and observes every published output. Privacy is quantified by the distinguishability of the system state under adjacent admittance matrices $\mY,\mY'$. This contrasts with DP graph release~\cite{d2026sok,dvorkin2023differentially}, whose published artifact is the privatized graph itself. Our mechanism instead releases the voltage phasor trajectories $\{\tilde{\vv}_\tau\}_{\tau=1}^T$ computed on the true~$\mY$, so the push-forward distribution~\eqref{eq:pushforward} is the central object of the analysis.


\subsection{Differential Privacy and Adjacency}

We adopt \emph{Probabilistic DP} (PDP)~\cite{machanavajjhala2008privacy}, which implies the standard $(\varepsilon,\delta)$-DP of~\cite{dwork2014algorithmic} and connects directly with hypothesis testing over probability distributions:
\begin{definition}[$(\varepsilon,\delta)$-PDP~\cite{machanavajjhala2008privacy}]
    \label{def:pdp}
    Let $\mD,\mD'\in\mathcal{D}$ be adjacent data sets in database $\mathcal{D}$, and let $\mathcal{M}$ be a query mechanism over $\mathcal{D}$ with output $\tilde{q}\sim f(\tilde{q}|\mD)$. Then $\mathcal{M}$ is $(\varepsilon,\delta)$-PDP if
    \begin{align}
        \Pr \left(\left|\ln\dfrac{f(\tilde{q}|\mD)}{f(\tilde{q}|\mD')}\right|>\varepsilon\right)\leq \delta.
    \end{align}
\end{definition}
In our DP power flow setting the database is $\gY_{\mathrm{feas}}$ and we consider $\mY_{\mathrm{full}},\mY_{\mathrm{full}}'$ adjacent if they satisfy:
\begin{definition}[Physical $r$-adjacency]
\label{def:dual_adjacency}
Two full admittance matrices $\mY_{\mathrm{full}},\mY_{\mathrm{full}}'$ with Kron reductions $\mY,\mY'\in\gY_{\mathrm{feas}}$ are \emph{$r$-adjacent} if
\begin{align}
\|\mY_{\mathrm{full}} - \mY_{\mathrm{full}}'\|_F &< r, \label{eq:adj_frob} \\
\gM_\mY &= \gM_{\mY'}, \label{eq:adj_manifold}
\end{align}
where $\gM_\mY := G_\mY(\R^n)$ is the set of retained-bus voltages in $\gS_0$ reachable by the mechanism under $\mY$ with the fixed power factors $\tan\mTheta$. 
For a reference $\mY_{\mathrm{full}}$,
\begin{align}
\mathcal N_{r}(\mY_{\mathrm{full}}) := \bigl\{\mY_{\mathrm{full}}':\ \|\mY_{\mathrm{full}}-\mY_{\mathrm{full}}'\|_F < r,\ \gM_{\mY'} = \gM_{\mY}\bigr\}.\nonumber
\end{align}
\end{definition}

Condition~\eqref{eq:adj_frob} is a \emph{physical} adjacency notion: the mechanism protects against distinguishing any two feeders whose line admittances (including switch states) differ by a total Frobenius distance less than $r$. By Corollary~\ref{cor:kron_perturbation}, any such pair induces reduced matrices satisfying $\|\mY-\mY'\|_F<\kappa_{\mathrm{Kron}}r$, which is the quantity that actually enters the privacy analysis of Section~\ref{sec:priv_analysis}. The manifold condition~\eqref{eq:adj_manifold} is measure-theoretic: it ensures that the pushforward measures induced by the mechanism share a common support on the reduced system, so that their Radon--Nikodym derivative exists, which is a requirement for the LLR to be well defined. 
Linearizing the power-factor consistency residual around $\mY$ yields $n$ homogeneous linear constraints on the induced $\Delta\mY$, so $\mathcal{N}_r(\mY_{\mathrm{full}})$ corresponds (via Corollary~\ref{cor:kron_perturbation}) to an $(n^2-n)$-dim subspace of the $\Delta\mY$-space intersected with a Frobenius ball of radius $\kappa_{\mathrm{Kron}} r$. This defines a rich family of admittance matrices against which the mechanism is formally DP.

In contrast, prior DP power flow work~\cite{fioretto2019differential,smith2021realistic} adopts per-parameter metric DP, which protects each admittance entry independently up to a scalar threshold and thereby fixes $\mY$'s sparsity pattern a priori. On the other hand,~\cite{dvorkin2023differentially} protects a scalar DC-OPF transmission capacity vector $\bar{\mathbf{f}}$ and thus defines their adjacency as entry-wise $\alpha$-adjacency on $\bar{\mathbf{f}}$, and neither the admittance matrix $\mY$ nor the loads enter the privacy guarantee. Our Frobenius-ball adjacency is instead applied to $\mY_{\mathrm{full}}$ itself and is coupled with DP on the load distribution via the pushforward through the true $\mY$, yielding joint protection of both objects. This dual protection is not offered by any prior DP power flow formulation. 

\subsection{The Mechanism and Its Privacy Guarantee}

The data owner (utility) holds the true admittance matrix $\mY$ and uses a standard power flow solver (e.g., OpenDSS) to compute voltage phasors from synthetic DP load inputs. We assume the DP loads are log-normal mixtures as in~\eqref{eq:loadclasses}, which we call the DP-GMM; the specific DP-GMM procedure is irrelevant to the analysis and deferred to Section~\ref{sec:results}. 

\begin{definition}[DP-Powerflow]
\label{def:mechanism}
Given the true admittance matrix $\mY$, public irradiance data $\{h_\tau^g\}$, and a DP-GMM fitted to historical loads, the mechanism $\mathcal{M}_\mY$ operates as follows:
\begin{enumerate}[label=(\alph*)]
    \item Draw a synthetic load vector $\tilde{\vs}_t^p$ from the DP-GMM, truncated to $[p_{\min}^{(\ell)},p_{\max}^{(\ell)}]$ per class.
    \item Solve the power flow equations~\eqref{eq:fixedpoint} with the true $\mY$ to obtain $\tilde{\vv}_t$;
    \item Release $\tilde{\vv}_t$.
\end{enumerate}
\end{definition}
See Algorithm~\ref{alg:voltage_release} for details. Because the synthetic load $\tilde{\vs}_t^p$ is random and never disclosed, the released voltage $\tilde{\vv}_t$ is a random variable whose distribution depends on $\mY$ through the power flow map, and no additional voltage noise is needed since the randomness of the synthetic load input is the privacy mechanism. Formally, $\mathcal{M}_\mY$ at time $t$ outputs a sample from the pushforward density $p_\vv(\cdot \mid \mY)$ in~\eqref{eq:pushforward}, and the DP guarantee with respect to $\mY$ follows from bounding the log-likelihood ratio across adjacent admittance matrices.

\begin{algorithm}[t]
\caption{Synthetic Voltage Phasor Release}
\label{alg:voltage_release}
\begin{algorithmic}[1]
\REQUIRE Historical log-load data $\{\log \vp_k^{(i)}\}_{k\in[n],\,i\in[d]}$ with $\log \vp_k^{(i)}\in\R^T$; true admittance matrix $\mY$; public irradiance $\{h_\tau^g\}_{\tau=1}^T$; power factor angles $\{\theta_k^{(\ell)}\}$; number of classes $L$; per-class load margins $p_{\min}^{(\ell)},p_{\max}^{(\ell)}$; privacy budget $\varepsilon_{\mathrm{load}}$.

\medskip
\STATE \textbf{Phase 1: Offline model fitting}
\STATE $\mathcal{C}_1,\ldots,\mathcal{C}_L
       \;\leftarrow\;
       \textsc{Partition}\!\bigl(\{\log \vp_k^{(i)}\}_{k,i}\bigr)$
\FOR{$\ell = 1,\ldots,L$}
    \STATE Class data:
           $\mathcal{D}_\ell
           \leftarrow
           \bigl\{\log \vp_k^{(i)} : k\in\mathcal{C}_\ell,\;i\in[d]\bigr\}
           \subset\R^T$
    \STATE Fit DP Gaussian to\\ $\mathcal{D}_\ell$:
    $\bigl(\tilde{\vmu}^{(\ell)},\,
            \tilde{\mSigma}_T^{(\ell)}\bigr)
    \;\leftarrow\;
    \textsc{DP-GMM}(\mathcal{D}_\ell,\,\varepsilon_{\mathrm{load}})$
\ENDFOR

\medskip
\STATE \textbf{Phase 2: Synthetic voltage generation}
\FOR{each bus $k\in\mathcal{C}_\ell$, $\ell\in[L]$}
    \STATE Draw $T$-dimensional log-load from the truncated distribution:\\
           $\tilde{\vxi}_k
           \sim\mathcal{N}\!\bigl(\tilde{\vmu}^{(\ell)},\,
                                   \tilde{\mSigma}_T^{(\ell)}\bigr)\;\Big|\;
                                   \exp(\tilde{\vxi}_k)\in
                                   [p_{\min}^{(\ell)},\,p_{\max}^{(\ell)}]^T$.
    \STATE Set active load:
           $[\tilde\vp_k]_\tau
           \leftarrow
           \exp([\tilde{\vxi}_k]_\tau)$
           for $\tau\in[T]$.
    \STATE Reactive load:
           $[\tilde\vq_k]_\tau
           \leftarrow
           \tan(\theta_k^{(\ell)})\,[\tilde\vp_k]_\tau$
           for $\tau\in[T]$.
\ENDFOR
\FOR{$\tau = 1,\ldots,T$}
    \STATE Form load injection:
           $\tilde{\vs}_\tau^p
           \leftarrow \tilde{\vp}_\tau + j\,\tilde{\vq}_\tau$.
    \STATE Solve power flow:\\
           $\tilde{\vv}_\tau
           \leftarrow
           \textsc{PowerFlowSolve}(\mY,\,
           \tilde{\vs}_\tau^p,\,\vs_{\tau}^g)$
\ENDFOR
\STATE \textbf{Release}
       $\{\tilde{\vv}_\tau\}_{\tau=1}^T$.
\end{algorithmic}
\end{algorithm}

\section{Privacy Analysis}
\label{sec:priv_analysis}
We develop the $(\varepsilon,\delta)$-PDP guarantee by decomposing the log-likelihood ratio (LLR) into an injection-likelihood term (Term~I) and a Jacobian-determinant term (Term~II), bounding each, and combining them via a $\chi^2$ concentration inequality on the whitened drawn loads.

\subsection{Log-Likelihood-Ratio Decomposition}
\label{sec:llr}
Let $\Lambda(\vv_{1:T}; \mY, \mY') := \log\bigl(p_\vv(\vv_{1:T}\mid\mY)/p_\vv(\vv_{1:T}\mid\mY')\bigr)$ denote the LLR between the pushforward densities under $\mY$ and $\mY'$. By Definition~\ref{def:pdp}, the mechanism is $(\varepsilon,\delta)$-PDP if $\Pr[|\Lambda|>\varepsilon]\leq\delta$. For $\mY_{\mathrm{full}}'\in\gN_r(\mY_{\mathrm{full}})$, the common-support condition~\eqref{eq:adj_manifold} ensures that $G_{\mY'}^{-1}$ is well-defined on $\gM_\mY$ and the two surface densities share support. Using~\eqref{eq:pushforward}, $\Lambda$ decomposes as
\begin{align}
    \label{eq:llr_decomp}
    \Lambda(\vv_{1:T}; \mY, \mY') &= \underbrace{\log \frac{f_\vp\bigl(\vp_{1:T}^\mY\bigr)}{f_\vp\bigl(\vp_{1:T}^{\mY'}\bigr)}}_{\text{Term I: injection likelihood}}+ \underbrace{\sum_{t=1}^T\log\frac{|J_\mY(\vv_t)|}{|J_{\mY'}(\vv_t)|}}_{\text{Term II: Jacobian ratio}},
\end{align}
where $\vp_{1:T}^{\mY} := (G_\mY^{-1}(\vv_1),\ldots,G_\mY^{-1}(\vv_T))$ and $\vp_{1:T}^{\mY'} := (G_{\mY'}^{-1}(\vv_1),\ldots,G_{\mY'}^{-1}(\vv_T))$ are the implied active-load trajectories.
Term~I measures how much the implied active-load trajectory shifts when the admittance matrix changes from $\mY$ to $\mY'$, evaluated under the class-$\ell$ log-normal load density with log-covariance $\mSigma^{(\ell)}$.

Term~II captures how the change alters the $n$-dim surface volume element via the Gramian of $DG_\mY$. We bound each term separately and combine.
\subsection{Term I: Injection Sensitivity}
\label{sec:termI}

At a fixed voltage trajectory $(\vv_1,\ldots,\vv_T)\in\gS_0^T$, admittances $\mY$ and $\mY' = \mY - \Delta\mY$ imply different active loads at each bus, with $\|\Delta\mY\|_F<\kappa_{\mathrm{Kron}} r$ by Corollary~\ref{cor:kron_perturbation}. Generation terms cancel, leaving $[\hat{p}_{\mY}(\vv_t)]_k - [\hat{p}_{\mY'}(\vv_t)]_k = -\Re\{[\vv_t]_k[\overline{\Delta\mY}\bar{\vv}_t]_k\}$. The following proposition bounds Term~I.

\begin{proposition}
\label{prop:termI}
Let $\Delta\mY := \mY - \mY'$ denote the induced perturbation of the Kron-reduced matrices, which by Corollary~\ref{cor:kron_perturbation} satisfies $\|\Delta\mY\|_F \le \kappa_{\mathrm{Kron}} r$ for any $r$-adjacent $\mY_{\mathrm{full}},\mY_{\mathrm{full}}'$. For each class $\ell\in[L]$, define the entrywise sensitivity constant
\begin{align}
\label{eq:dl_def}
d_\ell := \frac{V_{\max}^2\sqrt{d_{\max}}}{p_{\min}^{(\ell)}},
\end{align}
where $d_{\max} := \max_k |\{j : \mY_{kj}\neq 0\}|$ is the maximum node degree, and the \emph{precision sum}
\begin{align}
\label{eq:gamma_def}
\gamma^{(\ell)} := \vone^\top|(\mSigma^{(\ell)})^{-1}|\vone = \sum_{t,t'} |[(\mSigma^{(\ell)})^{-1}]_{tt'}|.
\end{align}
Define the uniform whitened-shift bound
\begin{align}
\label{eq:barpsi_def}
\bar\psi^{\,2} &:= (\kappa_{\mathrm{Kron}}\,r)^{2}\sum_{\ell=1}^L d_\ell^{\,2}\,\gamma^{(\ell)},
\end{align}
and the $\chi^2_{nT}$ tail factor
\begin{align}
\label{eq:tau_def}
\tau(\delta_R) &:= \sqrt{\,nT + 2\sqrt{nT\,\log(1/\delta_R)} + 2\log(1/\delta_R)\,}.
\end{align}
Then for any $\delta_R\in(0,1)$, with probability at least $1-\delta_R$ over the DP-GMM, Term~I satisfies
\begin{align}
\label{eq:termI_bound}
&|\mathrm{Term~I}_{1:T}|\nonumber\\
&~~~~\le \bar\psi\,\tau(\delta_R) + \tfrac{1}{2}\bar\psi^{\,2}
+ \kappa_{\mathrm{Kron}}\,r\sum_{\ell=1}^L d_\ell\sqrt{\gamma^{(\ell)}|\gC_\ell|}\,\sqrt{\vone^\top\mSigma^{(\ell)}\vone}.
\end{align}
\end{proposition}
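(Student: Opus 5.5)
The plan is to work in log-load coordinates, where Term~I becomes a difference of Gaussian quadratic forms. Write $\vx_{k} := (\log[\vp^{\mY}_1]_k,\ldots,\log[\vp^{\mY}_T]_k)^\top$ for the implied log-load trajectory at bus $k$ under $\mY$, and $\vx'_k$ the same under $\mY'$. Set $\vdelta_k := \vx_k-\vx'_k$.

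Since the buses are independent with class log-normal densities, the log-density ratio splits as follows. The $-\sum_t\log p$ Jacobian factors of the log-normal are handled separately. Each bus contributes
\[
-\tfrac12(\vx_k-\vmu)^\top\mSigma^{-1}(\vx_k-\vmu)+\tfrac12(\vx_k-\vdelta_k-\vmu)^\top\mSigma^{-1}(\vx_k-\vdelta_k-\vmu).
\]
This equals $-\vdelta_k^\top\mSigma^{-1}(\vx_k-\vmu)+\tfrac12\vdelta_k^\top\mSigma^{-1}\vdelta_k$. The residual log-normal normalization term is $\sum_t(\log p'_t-\log p_t) = \vone^\top\vdelta_k$.

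The three pieces of the claimed bound will come, respectively, from the following.
\begin{itemize}
\item The cross term, which is Gaussian/$\chi^2$ after whitening, gives $\bar\psi\,\tau(\delta_R)$.
\item The quadratic term gives $\tfrac12\bar\psi^2$.
\item The normalization term $\sum_k\vone^\top\vdelta_k$ gives the last summand.
\end{itemize}

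\textbf{Step 1 (sensitivity).} From the identity stated before the proposition, $|[\hat p_\mY(\vv_t)]_k-[\hat p_{\mY'}(\vv_t)]_k|\le V_{\max}\,|[\overline{\Delta\mY}\bar\vv_t]_k|$. By Cauchy--Schwarz over the at most $d_{\max}$ nonzeros in row $k$, this is at most $V_{\max}^2\sqrt{d_{\max}}\,\|[\Delta\mY]_{k,:}\|_2$.

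By the mean value theorem for $\log$ on $[p_{\min}^{(\ell)},\infty)$, which is guaranteed by Assumption~\ref{def:feasible_class}(ii) and the truncation, we get $|[\vdelta_k]_t|\le d_\ell\,\|[\Delta\mY]_{k,:}\|_2=:d_\ell\,\rho_k$. Here $\sum_k\rho_k^2=\|\Delta\mY\|_F^2\le(\kappa_{\mathrm{Kron}}r)^2$ by Corollary~\ref{cor:kron_perturbation}.

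\textbf{Step 2 (quadratic term).} Using $|\vdelta_k|\le d_\ell\rho_k\vone$ entrywise, we have $\vdelta_k^\top\mSigma^{-1}\vdelta_k\le d_\ell^2\rho_k^2\,\vone^\top|\mSigma^{-1}|\vone = d_\ell^2\rho_k^2\gamma^{(\ell)}$. Summing over buses and bounding $\sum_{k\in\gC_\ell}\rho_k^2\le(\kappa_{\mathrm{Kron}}r)^2$ gives total whitened shift norm $\sum_k\|\mSigma^{-1/2}\vdelta_k\|^2\le\bar\psi^2$. This yields the $\tfrac12\bar\psi^2$ term.

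\textbf{Step 3 (cross term, main step).} Write $\vz_k:=\mSigma^{-1/2}(\vx_k-\vmu)$. Under the mechanism's output law the true implied loads are the drawn ones, so the stacked $\vz\in\R^{nT}$ is standard Gaussian up to truncation. Truncation to a subset only lowers tail probabilities of $\|\vz\|$ if I argue via conditioning, or at worst costs a normalizing constant, which I would absorb or note.

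By Cauchy--Schwarz, $\bigl|\sum_k(\mSigma^{-1/2}\vdelta_k)^\top\vz_k\bigr|\le\bar\psi\,\|\vz\|$. The Laurent--Massart bound $\Pr[\|\vz\|^2\ge nT+2\sqrt{nT\log(1/\delta_R)}+2\log(1/\delta_R)]\le\delta_R$ then gives $\|\vz\|\le\tau(\delta_R)$ with probability at least $1-\delta_R$.

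This is the step I expect to be delicate. Cauchy--Schwarz is loose but matches the stated $\tau$. The real care is justifying the $\chi^2_{nT}$ tail under truncation and under the DP-GMM parameters rather than the true ones. I would state the bound with respect to the sampling law actually used, so that the $\mSigma$ in the bound is the released one.

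\textbf{Step 4 (normalization term).} Bound $|\vone^\top\vdelta_k|$ through the whitened shift: $|\vone^\top\vdelta_k|\le\|\mSigma^{1/2}\vone\|\,\|\mSigma^{-1/2}\vdelta_k\|\le\sqrt{\vone^\top\mSigma\vone}\;d_\ell\rho_k\sqrt{\gamma^{(\ell)}}$. Summing over $k\in\gC_\ell$ with $\sum_{k\in\gC_\ell}\rho_k\le\sqrt{|\gC_\ell|}\,\kappa_{\mathrm{Kron}}r$ gives exactly the final term.

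Adding the three bounds with the triangle inequality completes the argument.
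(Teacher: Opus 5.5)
Your proposal is correct and follows the paper's proof essentially step for step. It uses the same per-bus split of the log-normal ratio into three pieces: the whitened cross term $-\vpsi_k^\top\vz_k^{\mY}$, the quadratic term $\tfrac12\|\vpsi_k\|^2$, and the Jacobian term $-\vone^\top\vDelta_k$ (you wrote it with the opposite sign, which is harmless under the absolute value). It then uses the same uniform bound $\sum_k\|\vpsi_k\|^2\le\bar\psi^{\,2}$ via the log-Lipschitz lemma and the precision sum, Cauchy--Schwarz with the Laurent--Massart tail for the cross term, and Cauchy--Schwarz against $(\mSigma^{(\ell)})^{1/2}\vone$ for the last term.

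The truncation issue you flag is not addressed in the paper, which simply asserts $\vz^{\mY}\sim\gN(\vzero,\mI_{nT})$ exactly. Note that conditioning on the truncation box need not lower the tail of $\|\vz^{\mY}\|_2$: the box need not be symmetric about $\vmu^{(\ell)}$, so the tail can rise by up to a factor $1/\Pr(\text{box})$.
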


For a proof see Appendix~\ref{app:termI_proof}.

With the injection likelihood (Term~I) now bounded, it remains to bound the Jacobian determinant ratio (Term~II).

\subsection{Bounding Term~II: Jacobian Determinant Ratio}
\label{sec:termII}

Term~II measures how changing the admittance matrix alters the volume element of the power flow map. The analysis relies on the normalized Jacobian $\tilde{\mM}$, obtained by factoring out a diagonal voltage matrix from the Wirtinger Jacobian of the power flow map (see Appendix~\ref{app:termII_proof} for the derivation). Specifically, the Wirtinger Jacobian factors as $\mJ = \mD(\vv)\,\tilde{\mM}$, where
\begin{align}
\label{eq:Dv_def}
\mD(\vv) := \diag(v_1,\ldots,v_n,\bar{v}_1,\ldots,\bar{v}_n) \in \C^{2n\times 2n},
\end{align}
and the normalized Jacobian takes the form
\begin{align}
\label{eq:Mtilde_def}
\tilde{\mM} = \begin{pmatrix} \diag(\vs/\vv^2) & \bar{\mY} \\ \mY & \diag(\bar{\vs}/\bar{\vv}^2) \end{pmatrix},
\end{align}
where $s_i = v_i(\bar{\mY}\bar{\vv}+\bar{\vb})_i$ is the complex power injection at bus $i$.

Define the worst-case operator norm $\|\tilde{\mM}^{-1}\|_\star := \sup_{\vv\in\gS_0,\,\mY'\in\mathcal{N}_r(\mY_{\mathrm{full}})} \|\tilde{\mM}(\vv,\mY')^{-1}\|_{\mathrm{op}}$ and the geometric constant $C_\star := \sqrt{2}(1 + \sqrt{n}\,V_{\max}/V_{\min})$. Although a supremum, $\|\tilde{\mM}^{-1}\|_\star$ admits a closed-form upper bound from network data alone (see Appendix~\ref{app:Mtilde_closed_form_proof} for the full derivation). However, since this bound is naturally conservative, our evaluation uses the Monte Carlo calibration of $\norm{\tilde{\mM}^{-1}}_*$ which is detailed in Remark~\ref{rem:mc_calibration}.

\begin{proposition}
\label{prop:termII}
Define the admissibility parameter
\begin{align}
\label{eq:alpha_def}
\alpha := \|\tilde{\mM}^{-1}\|_\star\,C_\star\,\kappa_{\mathrm{Kron}}\,r,
\end{align}
with $\|\tilde{\mM}^{-1}\|_\star$ and $C_\star$ as defined above. Under the admissibility condition $\alpha < 1/4$,
\begin{align}
\label{eq:termII_trajectory}
|\mathrm{Term~II}_{1:T}| \le \bar{\Lambda}_{\mathrm{II}} := \frac{T\sqrt{n}\,\alpha(2+\alpha)}{2(1-4\alpha)}.
\end{align}
\end{proposition}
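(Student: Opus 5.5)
We bound Term~II timestep by timestep and then sum over $t\in[T]$, which produces the factor $T$ in $\bar\Lambda_{\mathrm{II}}$.

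\textbf{Step 1: reduce the Gramian ratio to a Jacobian comparison.} Fix $\vv_t\in\gM_\mY$. By \eqref{eq:DG_ift}, $DG_\mY=-\mJ^{-1}\mR$, where $\mJ=\mJ_{\tilde F_\mY}(\vv_t)$. Similarly $DG_{\mY'}=-\mJ'^{-1}\mR$, where $\mJ'$ is the Jacobian under $\mY'$ evaluated at the same voltage. The common-support condition \eqref{eq:adj_manifold} makes this legitimate. Write $\mJ'=\mJ(\mI+\mE)$ with $\mE:=\mJ^{-1}(\mJ'-\mJ)$. Then
\[
DG_{\mY'}=(\mI+\mE)^{-1}DG_\mY .
\]
So the Gramian $DG_{\mY'}^\top DG_{\mY'}$ is a congruence of $DG_\mY^\top DG_\mY$ through the factor $(\mI+\mE)^{-1}$.

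\textbf{Step 2: control the perturbation $\mE$.} Pass to the Wirtinger coordinates of Appendix~\ref{app:termII_proof}. The real Jacobian is unitarily equivalent, up to a constant change of basis, to $\mD(\vv)\tilde{\mM}$. The factor $\mD(\vv)$ is common to $\mY$ and $\mY'$, so it cancels and
\[
\mE \simeq \tilde{\mM}^{-1}(\tilde{\mM}'-\tilde{\mM}).
\]
The difference $\tilde{\mM}'-\tilde{\mM}$ has two parts.
\begin{itemize}
\item The off-diagonal blocks contribute $\Delta\mY$ and $\overline{\Delta\mY}$, with total Frobenius norm at most $\sqrt2\,\|\Delta\mY\|_F$.
\item The diagonal blocks contribute $\diag(\Delta s_i/v_i^2)$, where $\Delta s_i=v_i(\overline{\Delta\mY}\bar\vv)_i$. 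Each entry satisfies $|\Delta s_i/v_i^2|\le |(\overline{\Delta\mY}\bar\vv)_i|/V_{\min}$. Using $\|\bar\vv\|_2\le\sqrt n V_{\max}$, the diagonal part has norm at most $\sqrt2\,\sqrt n\,(V_{\max}/V_{\min})\|\Delta\mY\|_F$.
\end{itemize}
Adding the two parts gives $\|\tilde{\mM}'-\tilde{\mM}\|_{\mathrm{op}}\le C_\star\|\Delta\mY\|_F$. Corollary~\ref{cor:kron_perturbation} gives $\|\Delta\mY\|_F<\kappa_{\mathrm{Kron}}r$. Since $\mY'$ ranges over $\mathcal N_r$, the worst-case norm $\|\tilde{\mM}^{-1}\|_\star$ applies, and therefore $\|\mE\|_{\mathrm{op}}\le\alpha$.

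\textbf{Step 3: eigenvalue bounds on the determinant ratio.} Let $\mA:=DG_\mY$. The ratio of squared volume factors is
\[
\frac{\det(\mA^\top \mB\mA)}{\det(\mA^\top\mA)},\qquad \mB:=(\mI+\mE)^{-\top}(\mI+\mE)^{-1}.
\]
This ratio equals the product of the $n$ eigenvalues of the compression of $\mB$ onto $\mathrm{range}(\mA)$. Each such eigenvalue lies between $(1+\alpha)^{-2}$ and $(1-\alpha)^{-2}$. Hence, for the single timestep,
\[
\bigl|\log(|J_\mY|/|J_{\mY'}|)\bigr|\le \tfrac12\,\bigl|\log\det(\cdot)\bigr| .
\]
To reach the stated form I would not use the crude bound $n\log\frac{1}{1-\alpha}$. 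Instead I would bound $\log\det$ through $\|\mathbf{X}\|_F$ of the symmetric perturbation $\mathbf{X}:=\mB-\mI$, using
\[
\|\mathbf{X}\|_{\mathrm{op}}\le \frac{2\alpha+\alpha^2}{(1-\alpha)^2}\le \frac{\alpha(2+\alpha)}{1-4\alpha}
\]
for $\alpha<1/4$. Then apply $|\log\det(\mI+\mathbf{X})|\le\sqrt n\,\|\mathbf{X}\|_{\mathrm{op}}\cdot(\text{series factor})$ via Cauchy--Schwarz on the eigenvalues: $\sum|\lambda_i|\le\sqrt n(\sum\lambda_i^2)^{1/2}$.

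\textbf{Step 4: absorb the logarithm's higher-order terms.} The admissibility condition $\alpha<1/4$ guarantees $\|\mathbf{X}\|_{\mathrm{op}}<1$, so the series for $\log(1+\lambda)$ converges. Its higher-order terms must be absorbed into the $1/(1-4\alpha)$ denominator. The per-timestep bound is then $\sqrt n\,\alpha(2+\alpha)/(2(1-4\alpha))$, and summing over $T$ gives \eqref{eq:termII_trajectory}.

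I expect this constant bookkeeping in Steps~3--4 to be the main obstacle: producing exactly $\sqrt n$ rather than $n$ requires a Frobenius-norm argument on $\mE$ rather than an operator-norm argument. If that fails, I would redo Step~2 in Frobenius norm, using $\|\tilde{\mM}^{-1}\mathbf{\Delta}\|_F\le\|\tilde{\mM}^{-1}\|_{\mathrm{op}}\|\mathbf{\Delta}\|_F$. Note that $C_\star$ bounds the Frobenius norm of $\tilde{\mM}'-\tilde{\mM}$ as well. Then use
\[
|\log\det(\mI+\mathbf{X})|\le \|\mathbf{X}\|_F\,\sqrt n\,/(1-\|\mathbf{X}\|_{\mathrm{op}}).
\]
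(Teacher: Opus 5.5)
Your skeleton is the paper's: the congruence $DG_{\mY'}=(\mI+\bm{E})^{-1}DG_\mY$ (the paper's $\mK$), cancellation of $\mD(\vv)$ in Wirtinger form, the $C_\star$ bound on $\tilde{\mM}'-\tilde{\mM}$, and the thin-SVD reduction of the Gramian ratio to an $n\times n$ determinant. The gap is in Step~3 as written. Cauchy--Schwarz on the $n$ eigenvalues $\mu_i$ of the compression gives $\sum_i|\mu_i|\le\sqrt n\,\|\cdot\|_F$, not $\sqrt n\,\|\cdot\|_{\mathrm{op}}$. With only $\|\bm{E}\|_{\mathrm{op}}\le\alpha$ you are therefore stuck with the factor $n$ of your crude eigenvalue bound.

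Your ``fallback'' is not optional; it is exactly what the paper does. Your Step~2 computation is in fact a Frobenius estimate, so it already gives $\|\tilde{\mM}'-\tilde{\mM}\|_F\le C_\star\|\Delta\mY\|_F$ and hence $\|\bm{E}\|_F\le\alpha$. The scaled-unitary change of basis preserves this norm. Then, with $\mS=(\mI+\bm{E})(\mI+\bm{E})^\top$, write $\mS-\mI=\bm{E}+\bm{E}^\top+\bm{E}\bm{E}^\top$ and use $\|\bm{E}\bm{E}^\top\|_F\le\|\bm{E}\|_{\mathrm{op}}\|\bm{E}\|_F$ together with $\|\mS^{-1}\|_{\mathrm{op}}\le(1-\alpha)^{-2}$. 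This gives $\|\mathbf{X}\|_F=\|\mS^{-1}(\mI-\mS)\|_F\le w:=\alpha(2+\alpha)/(1-\alpha)^2$. The log series must be applied to the $n\times n$ compression $\mU_G^\top\mathbf{X}\mU_G$ (the paper's $-\mW$), whose Frobenius and operator norms are dominated by those of $\mathbf{X}$. It must not be applied to the $2n\times 2n$ matrix $\mI+\mathbf{X}$, which is the wrong determinant and would give $\sqrt{2n}$.

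The second problem is the constant bookkeeping in Steps~3--4. Do not weaken $w$ to $\alpha(2+\alpha)/(1-4\alpha)$ before applying the series. That quantity is $\ge 1$ once $\alpha\ge\sqrt{10}-3\approx 0.16$, so it cannot certify convergence on all of $\alpha<1/4$. It also overshoots the target once multiplied by $1/(1-\|\cdot\|)$.

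The missing step is the identity $1-w=(1-4\alpha)/(1-\alpha)^2$. It gives at once $w<1\Leftrightarrow\alpha<1/4$ and $w/(1-w)=\alpha(2+\alpha)/(1-4\alpha)$ exactly. Plug this into your inequality $|\log\det|\le\sqrt n\,\|\cdot\|_F/(1-\|\cdot\|_{\mathrm{op}})$, with both norms bounded by $w$. You get $\sqrt n\,\alpha(2+\alpha)/(1-4\alpha)$, then $\tfrac12$ per time step, and a factor $T$ after summing.

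The paper puts $1-\|\mW\|_F$ in the denominator, via the trace series with $|\mathrm{tr}(\mW^k)|\le\|\mW\|_F^k$ for $k\ge2$. Your operator-norm denominator is marginally sharper but yields the same final constant. Your handling of real versus Wirtinger coordinates as a scaled unitary similarity is a small improvement. The paper applies transposes to the complex $\mK$ without comment.
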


For a proof see Appendix~\ref{app:termII_proof}. The bound is controlled by the conditioning of $\tilde{\mM}$, the adjacency radius (through $\kappa_{\mathrm{Kron}} r$), the voltage bounds (through $C_\star$), and the trajectory length $T$. For small $\alpha$ it simplifies to $\bar\Lambda_{\mathrm{II}}\approx T\sqrt{n}\,\alpha$. Combining both terms gives the main privacy guarantee which is provided below.

\begin{theorem}
\label{thm:approx_dp}
Under the system model of Section~\ref{sec:model}, the log-load model~\eqref{eq:loadclasses}, the feasibility conditions of Assumption~\ref{def:feasible_class}, and the admissibility condition $\alpha < 1/4$ with $\alpha$ as in~\eqref{eq:alpha_def}, the mechanism is $(\varepsilon,\delta)$-PDP with respect to physical $r$-adjacency, with
\begin{align}
\label{eq:eps_delta}
\varepsilon = B + \bar\psi\,\tau(\delta),
\end{align}
where the total deterministic bias is
\begin{align}
\label{eq:B_total}
B &= \frac{T\sqrt{n}\,\alpha(2+\alpha)}{2(1-4\alpha)} + \frac{1}{2}\bar\psi^{\,2} \nonumber\\
&\quad + \kappa_{\mathrm{Kron}}\,r\sum_{\ell=1}^L d_\ell\sqrt{\gamma^{(\ell)}|\gC_\ell|}\;\sqrt{\vone^\top\mSigma^{(\ell)}\vone},
\end{align}
the uniform whitened-shift bound is
\begin{align}
\label{eq:barpsi_thm}
\bar\psi^{\,2} = (\kappa_{\mathrm{Kron}}\,r)^{2}\sum_{\ell=1}^L d_\ell^{\,2}\,\gamma^{(\ell)},
\end{align}
and the $\chi^2_{nT}$ tail factor is
\begin{align}
\label{eq:tau_thm}
\tau(\delta) = \sqrt{\,nT + 2\sqrt{nT\,\log(1/\delta)} + 2\log(1/\delta)\,},
\end{align}
with $d_\ell = V_{\max}^2\sqrt{d_{\max}}/p_{\min}^{(\ell)}$ and $\gamma^{(\ell)} = \vone^\top|(\mSigma^{(\ell)})^{-1}|\vone$.
\end{theorem}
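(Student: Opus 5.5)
The proof will be a direct assembly of Propositions~\ref{prop:termI} and~\ref{prop:termII} through the LLR decomposition~\eqref{eq:llr_decomp}. Fix $r$-adjacent $\mY_{\mathrm{full}},\mY_{\mathrm{full}}'$, so that $\mY_{\mathrm{full}}'\in\gN_r(\mY_{\mathrm{full}})$.

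\textbf{Step 1: the LLR is well defined.} By the manifold condition~\eqref{eq:adj_manifold}, both pushforward densities~\eqref{eq:pushforward} live on the common support $\gM_\mY$, so $\Lambda$ is well defined there. The decomposition $\Lambda=\mathrm{Term~I}+\mathrm{Term~II}$ then holds pointwise along every released trajectory $\vv_{1:T}$. The triangle inequality gives
\[
|\Lambda|\le|\mathrm{Term~I}_{1:T}|+|\mathrm{Term~II}_{1:T}|.
\]

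\textbf{Step 2: Term~II is deterministic.} Under the admissibility condition $\alpha<1/4$, Proposition~\ref{prop:termII} bounds $|\mathrm{Term~II}_{1:T}|$ by $\bar\Lambda_{\mathrm{II}}$ uniformly over all $\vv\in\gS_0$ and all $\mY'\in\gN_r(\mY_{\mathrm{full}})$. This bound therefore holds almost surely under the mechanism. The support lies in $\gS_0$ because the truncation in Definition~\ref{def:mechanism}, combined with Assumption~\ref{def:feasible_class}(ii), keeps the released voltages in the good set.

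\textbf{Step 3: Term~I on a high-probability event.} Apply Proposition~\ref{prop:termI} with $\delta_R=\delta$. This gives an event $E$ with $\Pr(E)\ge 1-\delta$ on which
\[
|\mathrm{Term~I}_{1:T}|\le \bar\psi\,\tau(\delta)+\tfrac12\bar\psi^2+\kappa_{\mathrm{Kron}} r\sum_\ell d_\ell\sqrt{\gamma^{(\ell)}|\gC_\ell|}\sqrt{\vone^\top\mSigma^{(\ell)}\vone}.
\]
The probability here must be taken over the output distribution $p_\vv(\cdot\mid\mY)$. This is the same as the distribution of the DP-GMM draw $\vp$, since $\vv=G_\mY(\vp)$ and the whitened $\chi^2_{nT}$ variable in Proposition~\ref{prop:termI} is a function of $\vp^{\mY}_{1:T}=\vp$.

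\textbf{Step 4: combine.} On $E$, adding the two bounds gives $|\Lambda|\le B+\bar\psi\,\tau(\delta)=\varepsilon$, where $B$ collects the three deterministic pieces as in~\eqref{eq:B_total}. Hence $\Pr(|\Lambda|>\varepsilon)\le\Pr(E^c)\le\delta$, which is Definition~\ref{def:pdp}. Taking the neighbor roles symmetrically (replacing $\mY$ by $\mY'$) is unnecessary, since the absolute value already appears in the PDP definition.

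\textbf{Main obstacle.} The delicate point is Step~3: the probability in the PDP definition must be under the mechanism run on $\mY$, and it must match the probability space in which Proposition~\ref{prop:termI}'s $\chi^2$ concentration is stated. The truncation to $[p_{\min},p_{\max}]$ also changes the density normalization. I would first check that both $\mY$ and $\mY'$ share the same truncation region, which follows from the common manifold together with Assumption~\ref{def:feasible_class}(ii). Then the normalizing constants cancel in Term~I, and the truncated law is dominated by the untruncated Gaussian tail event up to the factor already absorbed in Proposition~\ref{prop:termI}.
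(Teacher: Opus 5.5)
Your proposal is correct and is essentially the paper's proof: split $\Lambda$ via~\eqref{eq:llr_decomp}, bound Term~II deterministically by Proposition~\ref{prop:termII}, invoke Proposition~\ref{prop:termI} with $\delta_R=\delta$, and add the two bounds on the same event of probability at least $1-\delta$. Two of your added justifications are unnecessary and slightly off, though neither affects the argument. First, Assumption~\ref{def:feasible_class}(ii) goes from $\vv\in\gS_0$ to loads in range, not the converse; the paper simply builds $\vv\in\gS_0$ into the definition of $\gM_\mY$. Second, Proposition~\ref{prop:termI} absorbs no truncation factor, since its proof treats $\vz^{\mY}$ as exactly $\gN(\vzero,\mI_{nT})$, so using it as stated, as your Step~3 does, is the right move.
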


For a proof see Appendix~\ref{app:thm_approx_proof}. Theorem~\ref{thm:approx_dp} delivers a formal $(\varepsilon,\delta)$-DP guarantee on the admittance matrix $\mY$ from the DP-synthetic loads alone, yielding our contribution that no additional perturbation is necessary. The bound has three pieces. The Jacobian term $T\sqrt{n}\,\alpha(2+\alpha)/(2(1-4\alpha))$ captures volume-element distortion of the power flow map and is controlled by the adjacency radius $r$ through $\alpha$. The deterministic bias scales as $(\kappa_{\mathrm{Kron}}r)^2$ and is likewise driven by $r$. The tail term $\bar\psi\,\tau(\delta)$ captures the stochastic fluctuation of Term~I, where $\bar\psi$ depends on the DP-GMM covariance $\mSigma^{(\ell)}$ and $\tau(\delta)\approx\sqrt{nT}$ for $nT\gg\log(1/\delta)$. 

The practitioner has two levers. The adjacency radius $r$ is a sensitivity parameter that sets how broad a class of topology perturbations the mechanism protects against, and it controls all three terms of the bound. 
The DP-GMM covariance $\mSigma^{(\ell)}$ enters only the tail and bias where a larger covariance decreases the precision sum $\gamma^{(\ell)}$ and thus a noisier DP-GMM yields stronger topology-level privacy. Because the Jacobian term does not depend on $\mSigma^{(\ell)}$, it imposes a floor on $\varepsilon$ that no amount of DP-GMM noise can cross. Driving $\varepsilon$ below this floor requires tightening $r$.

The remaining quantities are fixed by the feeder but shape how forgiving the bound is. A well-conditioned admittance matrix has small $\|\tilde{\mM}^{-1}\|_\star$, which tightens $\alpha$ and shrinks the Jacobian term. Poorly connected feeders with near-singular $\tilde{\mM}$ inflate $\alpha$ and can push the admissibility condition $\alpha<1/4$ toward its boundary. A wide voltage window $[V_{\min}, V_{\max}]$ inflates $C_\star = \sqrt{2}(1+\sqrt{n}V_{\max}/V_{\min})$, which also inflates $\alpha$, so feeders operating close to nominal voltage give tighter guarantees than those with loose regulation. 
The load margins $[p_{\min}^{(\ell)}, p_{\max}^{(\ell)}]$ appear through $d_\ell = V_{\max}^2\sqrt{d_{\max}}/p_{\min}^{(\ell)}$ where load classes with small $p_{\min}^{(\ell)}$, which correspond to feeders with lightly loaded buses, amplify the bias and tail terms.  Consequently, the practitioner should set $p_{\min}^{(\ell)}$ and $p_{\max}^{(\ell)}$ as tight as feasibility allows, since wider margins trade directly against the privacy bound through $d_\ell$. Finally, when volt-var coupling is present, Theorem~\ref{thm:approx_dp} holds verbatim upon replacing $\tilde{\mM}$ by the effective Jacobian $\tilde{\mM}_{\mathrm{eff}} = \tilde{\mM} - \mD(\vv)^{-1}\mJ_{\Gamma\vh}$ in the definition of $\alpha$.

\begin{remark}[Monte Carlo calibration of $\bar{\Lambda}_{\mathrm{II}}$]
\label{rem:mc_calibration}
The worst-case $\|\tilde{\mM}^{-1}\|_\star$ used in $\bar{\Lambda}_{\mathrm{II}}$ is taken over all of $\gS_0$, which may be overly conservative. In practice, one can replace $\|\tilde{\mM}^{-1}\|_\star$ by a probabilistic threshold $\mu_0$ and absorb the exceedance probability into~$\delta$. Specifically, define
\begin{align}
\delta_M := \Pr_{\vp\sim f_\vp}\!\bigl(\exists\,t:\|\tilde{\mM}(\vv_t,\mY)^{-1}\|_{\mathrm{op}} > \mu_0'\bigr), \nonumber\\ \mu_0' = \frac{\mu_0}{1+\mu_0\,C_\star\,\kappa_{\mathrm{Kron}}\,r},
\end{align}
where the shift $\mu_0\to\mu_0'$ accounts for the worst-case perturbation over $\mathcal{N}_r(\mY_{\mathrm{full}})$ via the induced reduced-system radius $\kappa_{\mathrm{Kron}} r$. The probability $\delta_M$ can be estimated by drawing $N_{\mathrm{cal}}$ trajectories from the DP-GMM, solving power flow with the utility's own $\mY$, and counting exceedances (with Clopper--Pearson confidence intervals). The mechanism is then $(\varepsilon,\delta+\delta_M)$-PDP with $\bar{\Lambda}_{\mathrm{II}}$ computed using $\mu_0$ in place of $\|\tilde{\mM}^{-1}\|_\star$, where $\varepsilon$ is computed from Theorem~\ref{thm:approx_dp}.
\end{remark}

\section{Empirical Evaluation}
\label{sec:results}

We conduct two complementary evaluations. The first measures distributional fidelity of the released voltage phasors through the Wasserstein-1 distance between the privatized and true voltage magnitude distributions, swept over a grid of target $\varepsilon$ values. The second evaluates downstream machine-learning utility by training a three-layer multilayer perceptron (MLP) on privatized voltage data for a missing-data recovery task and reporting test-set MSE. 
The Wasserstein distance upper-bounds the generalization gap of any downstream model trained on the synthetic data~\cite{lopez2018generalization}. Making this distance small is therefore particularly valuable for training foundation models, since it directly limits how much worse a downstream model can perform when trained on synthetic rather than true data.
The MLP experiment in turn demonstrates that naive DP-SGD training~\cite{abadi2016deep} is insufficient for the extensive training requirements of grid foundation models and indicates the benefit of our proposed method over naive noise approaches.

Both experiments utilize the IEEE~123-bus test feeder with a full year of PV and loads provided by the OEDI dataset\cite{OEDI_Dataset_5773}. The distribution feeder is compiled in OpenDSS, and we apply the Kron reduction of Section~\ref{sec:model_kron} to eliminate the zero-injection buses, yielding a reduced admittance matrix $\mY\in\C^{n\times n}$ and the constant-current offset $\vb = [\mY_{\mathrm{full}}]_{\mathrm{retained},\,\mathrm{slack}}\,v_{\mathrm{slack}}$ as in~\eqref{eq:b_def}. Historical load and generation profiles are obtained from the feeder's loadshape definitions at 15-minute resolution ($T=96$ samples per day).

\begin{table*}[t]
\centering
\caption{Per-sample $\ell_2$ sensitivities and noise standard deviations for each mechanism.}
\label{tab:sensitivities}
\footnotesize
\renewcommand{\arraystretch}{1.4}
\begin{tabular}{llccc}
\toprule
\textbf{Mechanism} & \textbf{Noise target} & \textbf{Per-sample $\Delta_2$} & \textbf{Noise $\sigma$} & \textbf{Protects} \\
\midrule
DP-GMM $\to$ PF (proposed) & --- & --- (Thm.~\ref{thm:approx_dp}) & none & $\mY$ + loads \\
Gauss.\ on voltages & voltages & $\Delta_2^{(\mY)}$ & $\frac{\sqrt{T}\,\Delta_2^{(\mY)}\sqrt{2\ln(1.25/\delta)}}{\varepsilon}$ & $\mY$ \\[4pt]
Gauss.\ on $\mY$ & $\mY$ entries & $r$ & $\frac{r\sqrt{2\ln(1.25/\delta)}}{\varepsilon}$ & $\mY$ \\[4pt]

Joint voltage noise & voltages & $\max\!\bigl(\Delta_2^{(\mY)},\,\Delta_2^{(\mathrm{load}\to v)}\bigr)$ & $\frac{\sqrt{T}\,\Delta_2^{(\mathrm{joint})}\sqrt{2\ln(1.25/\delta)}}{\varepsilon}$ & $\mY$ + loads \\[4pt]
DP-GMM + Gauss.\ volt.\ & voltages & $\Delta_2^{(\mY)}$ & $\frac{\sqrt{T}\,\Delta_2^{(\mY)}\sqrt{2\ln(1.25/\delta_{\mY})}}{\varepsilon_{\mY}}$ & $\mY$ + loads \\[4pt]
Noisy loads + Gauss.\ volt.\ & load entries & $\sqrt{n_L T}\,\Delta_{\mathrm{load}}$ & $\frac{\sqrt{n_L T}\,\Delta_{\mathrm{load}}\sqrt{2\ln(1.25/\delta_{\mathrm{load}})}}{\varepsilon_{\mathrm{load}}}$ & \multirow{2}{*}{$\mY$ + loads} \\
 & voltages & $\Delta_2^{(\mY)}$ & $\frac{\sqrt{T}\,\Delta_2^{(\mY)}\sqrt{2\ln(1.25/\delta_{\mY})}}{\varepsilon_{\mY}}$ & \\
\bottomrule
\end{tabular}
\renewcommand{\arraystretch}{1.0}
\vspace{4pt}

{\footnotesize See~\eqref{eq:sens_voltage_Y}--\eqref{eq:sens_voltage_load} for $\Delta_2^{(\mY)}$ and $\Delta_2^{(\mathrm{load}\to v)}$. Here $r$ is the physical Frobenius radius on $\mY_{\mathrm{full}}$ (Definition~\ref{def:dual_adjacency}) and $\kappa_{\mathrm{Kron}}$ is the Kron amplification factor~\eqref{eq:kappa_kron_def}.}
\end{table*}
\subsection{Mechanisms Under Comparison}
\label{sec:mechanisms}

We compare several voltage-release mechanisms under the same cumulative privacy budget. For target $(\varepsilon,\delta)$, each mechanism ensures the release of $Td$ voltage phasor vectors ($d$ days) is $(\varepsilon,\delta)$-DP with respect to the quantity it protects. We write $\Delta_{\mathrm{load}} := p_{\max} - p_{\min}$ for the global active-load range and $n_L$ for the number of load buses.
Per-sample sensitivities and noise calibrations follow standard Gaussian mechanism arguments (Appendix~\ref{app:sensitivity_proofs}) and are collected in Table~\ref{tab:sensitivities}. Two sensitivities recur:
\begin{align}
\label{eq:sens_voltage_Y}
\Delta_2^{(\mY)} &= \frac{V_{\max}^2\sqrt{n}\,\kappa_{\mathrm{Kron}}\,r\,\|\tilde{\mM}^{-1}\|_\star}{V_{\min}}, \\
\label{eq:sens_voltage_load}
\Delta_2^{(\mathrm{load}\to v)} &= \frac{\sqrt{2}\,\Delta_{\mathrm{load}}\,\|\tilde{\mM}^{-1}\|_\star}{V_{\min}}.
\end{align}
The mechanisms are:
\begin{itemize}[leftmargin=*,topsep=2pt,itemsep=1pt]
\item \emph{DP-GMM $\to$ PF (proposed)}: Algorithm~\ref{alg:voltage_release} with synthetic loads at budget $\varepsilon_{\mathrm{load}}$ passed through AC power flow on the true~$\mY$. $\varepsilon$ is computed via Theorem~\ref{thm:approx_dp} with $\|\tilde{\mM}^{-1}\|_\star$ calibrated via Remark~\ref{rem:mc_calibration}. We select $\varepsilon_{\mathrm{load}}$ that minimizes Wasserstein distance subject to the target $\varepsilon$.
\item \emph{Noise-free baseline}: AC power flow with true loads and~$\mY$. 
\item \emph{Joint Voltage Noise}: a single Gaussian mechanism on the voltage output protects both $\mY$ and loads, with budget $\varepsilon=\min(\varepsilon_{\mathrm{load}},\varepsilon)$, where sensitivity is given by the max sensitivity of the load and voltage release respectively.
\item \emph{DP-GMM + Gaussian Voltage Noise}: DP-GMM loads at budget $\varepsilon_{\mathrm{load}}$ passed through the true~$\mY$, with an additional Gaussian mechanism on the voltages at budget $\varepsilon$ calibrated to the sensitivity of $\mY$.
\item \emph{Noisy Loads + Gaussian Voltage Noise}: replaces the DP-GMM with i.i.d.\ Gaussian noise on the load matrix at $\varepsilon_{\mathrm{load}}$, clipped and passed through the true~$\mY$, with Gaussian noise added to the voltages at budget $\varepsilon$.
\item \emph{DP-SGD}~\cite{abadi2016deep}: included only in the ML experiment since it protects the trained model rather than the released data. Matching the $\mY$-level budget $(\varepsilon,\delta)$ requires a group-privacy reduction $\varepsilon_{\mathrm{sample}}=\varepsilon/N$ (with adjusted $\delta_{\mathrm{sample}}$)~\cite{dwork2014algorithmic} to account for one feeder's $\mY$ affecting all $N$ samples we then convert to zero-concentrated DP~\cite{bun2016concentrated} via $\rho=(\sqrt{\ln(1/\delta_{\mathrm{sample}})+\varepsilon_{\mathrm{sample}}}-\sqrt{\ln(1/\delta_{\mathrm{sample}})})^2$ and calibrate the per-step Gaussian noise multiplier under subsampling rate $q=B/N$ ($B=64$, $N=1000$).
\end{itemize}
\begin{remark}
\label{rem:Y_noise_infeasible}
Prior work on admittance-matrix privacy~\cite{fioretto2019differential,smith2021realistic} adopts \emph{per-parameter metric DP} with constraint-based feasibility-restoring post-processing, which fixes the sparsity pattern of $\mY$ a priori and is therefore not comparable to our $r$-adjacency. Similarly,~\cite{dvorkin2023differentially} provides DP with respect to thermal constraints and does not protect the topology. The natural remaining comparison is a Gaussian mechanism applied directly to $\mY$, but this is not viable without a feasibility-restoring step. Even at $r=10^{-3}$ and $\varepsilon=100$, OpenDSS failed to converge because unstructured perturbation destroys the sparsity, symmetry, and shunt structure required for Assumption~\ref{def:feasible_class}. We therefore omit this comparison and leave a feasibility-restoring variant to future work.
\end{remark}

\begin{figure}
    \centering
\includegraphics[width=0.85\linewidth]{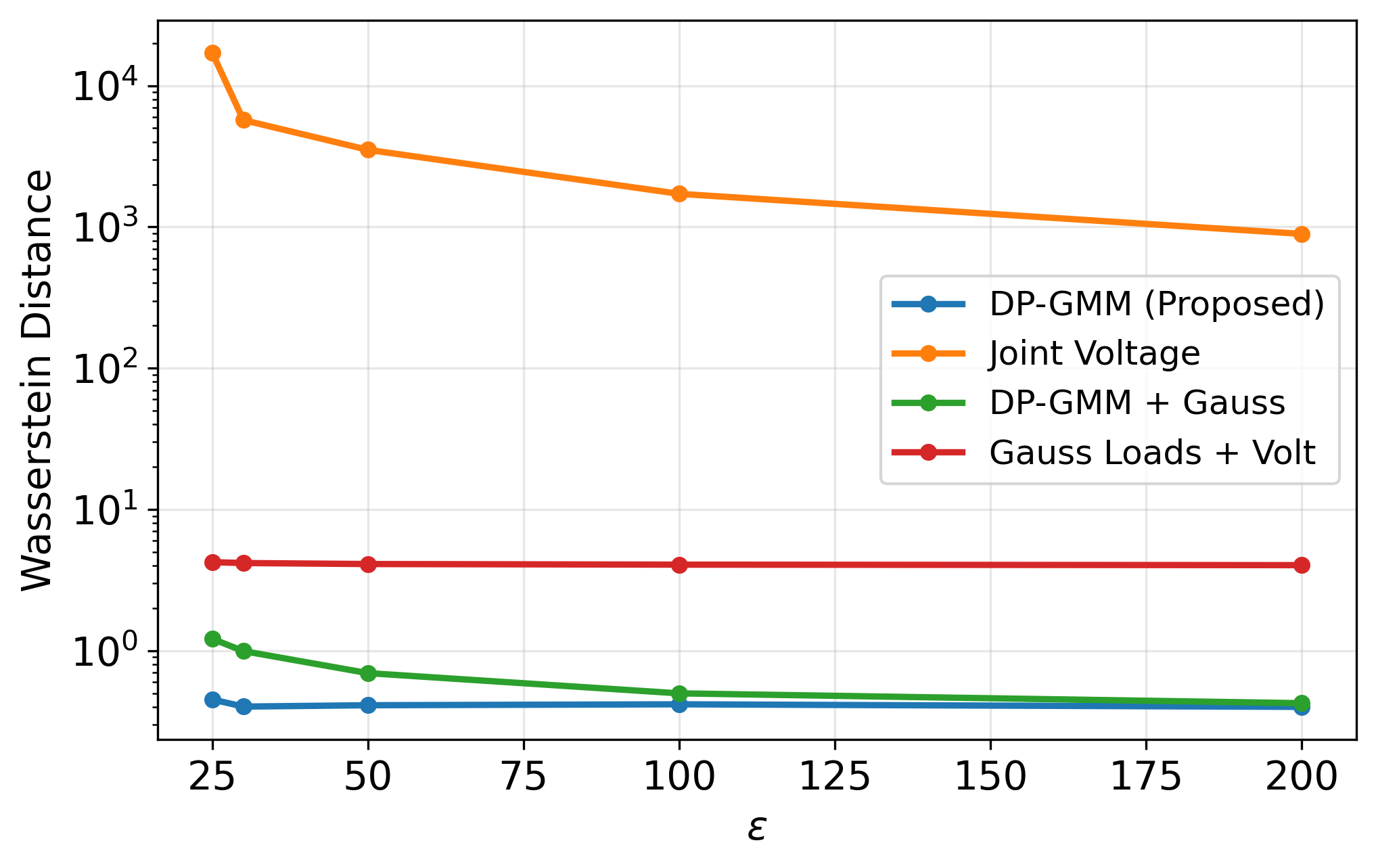}
    \caption{Wasserstein-1 metric as a function of $\varepsilon$.}
    \label{fig:wass}
\end{figure}

\begin{figure*}
    \centering
    \includegraphics[width=0.45\linewidth]{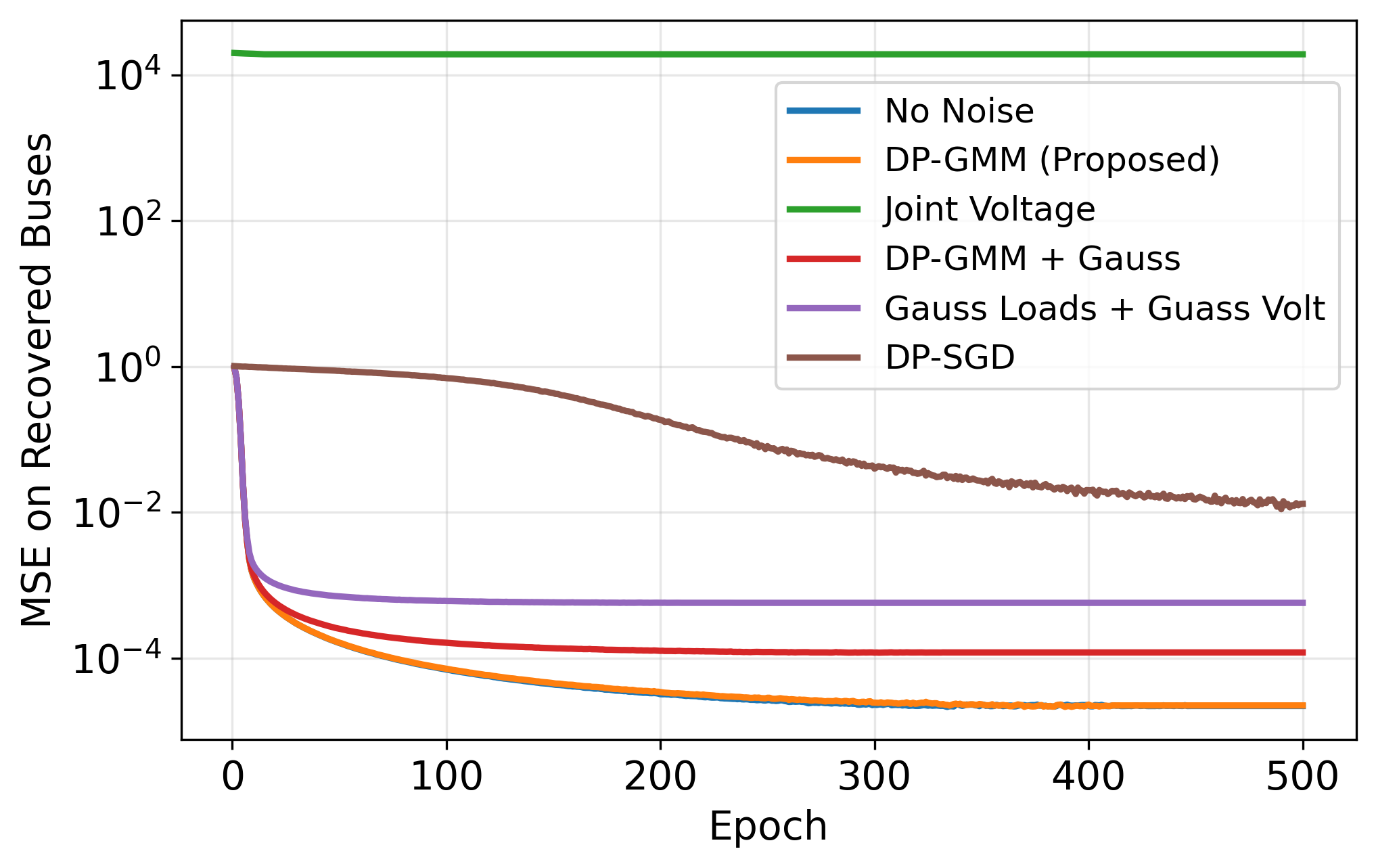}
    \includegraphics[width=.45\linewidth]{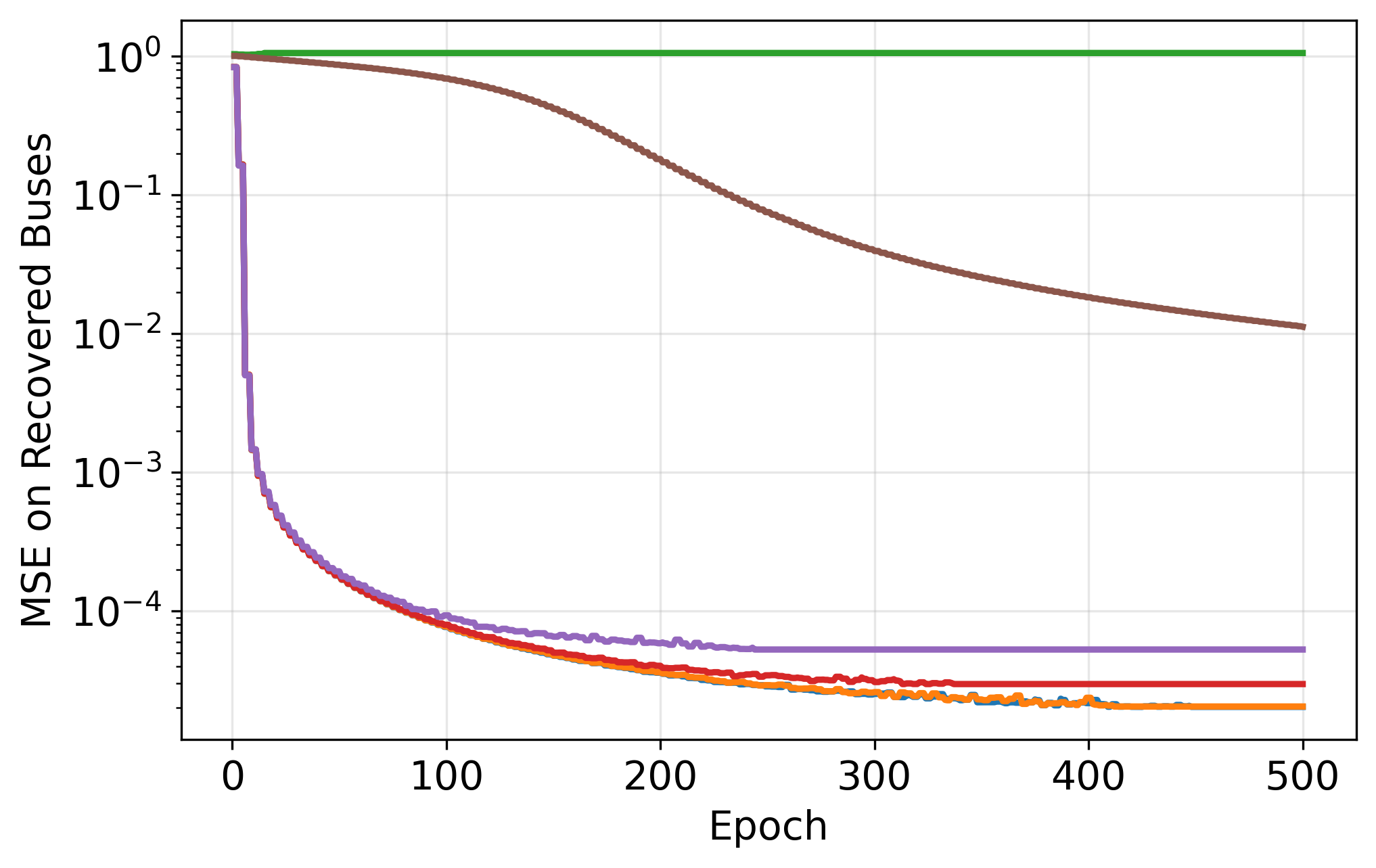}
    \caption{Training (left) and test (right) MSE on the masked entries of the per-bus missing-data recovery task (Section~\ref{sec:ml_experiment}), plotted against training epoch. All methods operate at the same $\mY$-level privacy budget $(\varepsilon,\delta)$ and share the same true-data evaluation and test sets. Curves show the mean across runs.}
    \label{fig:ml_exp}
\end{figure*}
\subsection{Wasserstein Distance Experiment}
\label{sec:wasserstein}
We sweep over a grid of per-sample$(\vv_t\in\R^n)$ $\varepsilon$ values $\{25,\,30,\,50,\,100,\,200\}$ and, for each value, generate voltage trajectories from all mechanisms over a fixed number of days, using the $\delta$ produced by the DP-GMM Monte Carlo calibration (Remark~\ref{rem:mc_calibration}) as the common $\delta$. Voltages are generated by solving AC power flow via OpenDSS with the true network model. For the proposed DP-GMM mechanism, we search over $\varepsilon_{\mathrm{load}}$ values, evaluate Theorem~\ref{thm:approx_dp} via Monte Carlo, and select the configuration that minimizes the Wasserstein distance subject to the target $\varepsilon$ for~$\mY$. Each configuration is evaluated over 20 independent Monte Carlo runs, and reported values are averaged across runs.

The Wasserstein-1 distance is computed between the flattened daily voltage magnitude vectors of each mechanism and the true (non-private) voltages. Figure~\ref{fig:wass} shows that the DP-GMM approach achieves a substantially smaller Wasserstein distance than the Joint Voltage and Noisy Loads + Gaussian Voltage baselines, and a moderate improvement over the DP-GMM + Gaussian output-perturbation variant that narrows at large $\varepsilon$. This improvement over the latter reflects the effective ``free'' privacy revealed by our analysis. The proposed mechanism attains the same fidelity without additional Gaussian corruption of the voltages.

\subsection{Machine Learning Experiment}
\label{sec:ml_experiment}
We evaluate downstream ML utility via a per-bus missing-data recovery task. Each sample is a single bus's voltage magnitude trajectory over a contiguous window of $d_w$ time steps, i.e., a vector in $\R^{d_w}$. We set $d_w = 48$ (12 hours at 15-minute resolution) with a 3 hour overlap with subsequent samples. From each trajectory, a random contiguous block of length $L = \lfloor\sqrt{0.25}\,d_w\rfloor = 24$ is masked, and the task is to reconstruct the masked block from the observed context on the same bus. A three-layer MLP (with ReLU activations and hidden dimension~$32$) is trained to predict the full trajectory from the partial observation, with MSE loss evaluated on the masked entries only. The evaluation and test sets are drawn from the \emph{true} (non-private) voltage distribution and are shared across all methods (including DP-SGD), which are compared at the same $\mY$-level privacy budget $(\varepsilon,\delta)$ as in Section~\ref{sec:mechanisms}.

For each day of $T=96$ voltage vectors generated by a single feeder, the release is $(\varepsilon_{\mathrm{day}},\delta)$-DP at the $\mY$ level, so a training corpus spanning $d$ days has total cost $\varepsilon = d\,\varepsilon_{\mathrm{day}}$ by basic composition. Training uses Adam (standard) or per-sample-clipped SGD (DP-SGD) with early stopping on the shared evaluation MSE. Each method is trained over 20 independent runs with different random seeds, and Fig.~\ref{fig:ml_exp} reports the mean training and test MSE across runs. Fig.~\ref{fig:ml_exp} shows two takeaways. First, DP-SGD converges slowly because its per-iteration budget composition forces substantial noise at matched $\mY$-level privacy. Second, the proposed mechanism tracks the error-free baseline closely, reflecting the value of preserving load correlation structure via DP-GMM and using the true admittance matrix in power flow. Finally, Fig.~\ref{fig:ml_exp} shows a clear gain from the proposed method over the DP-GMM + Gaussian output-perturbation variant. The separation between the two curves is entirely attributable to the ``free'' privacy enabled by our analysis, mirroring the gap observed in the Wasserstein experiment. 
\section{Conclusion}
\label{sec:conclusion}
The practical barrier to training GFMs is the availability of data. Privacy concerns over customer loads and network topology keep utility datasets siloed. We showed that a single privacy mechanism, namely a DP-GMM fitted to historical loads and propagated through the true AC power flow, yields voltage phasor trajectories that are simultaneously $(\varepsilon,\delta)$-DP with respect to both the loads and the admittance matrix $\mY$, with the guarantee on $\mY$ obtained \emph{for free} from the DP noise already required by the load model. Empirical results on the IEEE 123-bus feeder confirm that the proposed mechanism closely tracks the no-noise baseline, while Gaussian output-perturbation baselines and DP-SGD degrade substantially at matched privacy budgets. The principle generalizes beyond voltage release. Any quantity computed from loads and a fixed network, such as OPF solutions, hosting-capacity assessments, or reliability indices, inherits privacy guarantees from the same DP load model, which we view as the natural next step toward privacy-preserving Grid Foundation Models.

\bibliographystyle{ieeetr}
\bibliography{refs}

\begin{appendix}

\subsection{Supporting Lemma}
\label{app:supporting}

\begin{lemma}[Log is Lipschitz on a positive interval]
\label{lem:log_lipschitz}
If $x,x'\in[p_{\min},p_{\max}]$ with $p_{\min}>0$, then $|\log x-\log x'|\le |x-x'|/p_{\min}$. \emph{Proof.} By the mean value theorem, $\log x-\log x'=(x-x')/\xi$ for some $\xi\in[p_{\min},p_{\max}]$, so $|1/\xi|\le 1/p_{\min}$.
\end{lemma}

\subsection{Proof of Proposition~\ref{prop:termI}}
\label{app:termI_proof}

The proof proceeds in four steps: (i)~derive the exact per-bus LLR via the midpoint formula, (ii)~pass to whitened coordinates, (iii)~bound $\sum_k\|\vpsi_k(\vv)\|^2$ uniformly in $\vv$, and (iv)~apply Cauchy--Schwarz together with a $\chi^2$ tail on $\|\vz^{\mY}\|_2$.

\emph{Step 1: Midpoint formula.}
For bus $k\in\gC_\ell$, let $\vxi_k(\vv) := \log\vp_k^{\mY}(\vv)$ and $\vxi'_k(\vv) := \log\vp_k^{\mY'}(\vv)$ denote the \emph{implied} log-load vectors at voltage $\vv$ under the two admittance matrices. Define $\vDelta_k(\vv) := \vxi_k(\vv) - \vxi'_k(\vv)$ and the midpoint $\bar{\vxi}_k(\vv) := \tfrac{1}{2}(\vxi_k(\vv)+\vxi'_k(\vv))$.

The log-density of the multivariate log-normal at $\vp_k^{\mY}(\vv)$ is
\[
\log f(\vp_k^{\mY}(\vv)) = \mathrm{const} - \tfrac{1}{2}(\vxi_k - \vmu^{(\ell)})^\top(\mSigma^{(\ell)})^{-1}(\vxi_k - \vmu^{(\ell)}) - \vone^\top\vxi_k.
\]
In the ratio, the normalisation constant cancels. The quadratic term yields $-\vDelta_k^\top(\mSigma^{(\ell)})^{-1}(\bar{\vxi}_k - \vmu^{(\ell)})$, using $\vxi'_k + \tfrac{1}{2}\vDelta_k = \bar{\vxi}_k$. The Jacobian term $-\vone^\top\vxi_k$ contributes $-\vone^\top\vDelta_k$. Combining:
\begin{equation}
\label{eq:midpoint_formula_app}
\Lambda_I^{(k)} = \log\frac{f(\vp_k^{\mY}(\vv))}{f(\vp_k^{\mY'}(\vv))} = \vDelta_k(\vv)^\top(\mSigma^{(\ell)})^{-1}( \vmu^{(\ell)} + \mSigma^{(\ell)}\vone-\bar{\vxi}_k(\vv)).
\end{equation}

\emph{Step 2: Whitened coordinates.} Let $\vbeta^{(\ell)} := (\mSigma^{(\ell)})^{1/2}\vone$ and $\vz_k^{\mY} := (\mSigma^{(\ell)})^{-1/2}(\vxi_k(\vv) - \vmu^{(\ell)})$ and $\vpsi_k(\vv) := (\mSigma^{(\ell)})^{-1/2}\vDelta_k(\vv)$. Under the $\mY$-mechanism, power flow consistency implies $\vxi_k(\vv) = \tilde\vxi_k$ (the drawn log-load at bus~$k$), so $\vxi_k(\vv) \sim \gN(\vmu^{(\ell)},\mSigma^{(\ell)})$ and consequently $\vz_k^{\mY}\sim\gN(\vzero, \mI_T)$ exactly, independently across buses. Substituting $\vDelta_k = (\mSigma^{(\ell)})^{1/2}\vpsi_k$ and $\bar{\vxi}_k - \vmu^{(\ell)} = (\mSigma^{(\ell)})^{1/2}\bar{\vz}_k$ with $\bar{\vz}_k = \vz_k^{\mY} - \tfrac{1}{2}\vpsi_k$ into~\eqref{eq:midpoint_formula_app}:
\begin{align}
\Lambda_I^{(k)} &= -\vpsi_k(\vv)^\top(\bar{\vz}_k + \vbeta^{(\ell)}) \nonumber\\
&= -\vpsi_k(\vv)^\top\vz_k^{\mY} + \tfrac{1}{2}\|\vpsi_k(\vv)\|^2 - \vpsi_k(\vv)^\top\vbeta^{(\ell)}.
\label{eq:per_bus_whitened}
\end{align}

Note that $\vpsi_k(\vv)$ is a function of the drawn sample through $\vv$: the counterfactual implied log-load $\vxi'_k(\vv) = \log\vp_k^{\mY'}(\vv)$ is a deterministic function of the released voltage, and $\vv$ itself is determined by the drawn loads via the power flow map. Consequently $\vpsi_k(\vv)$ is \emph{not} independent of $\vz_k^{\mY}$, which precludes analyzing $\vpsi_k^\top\vz_k^{\mY}$ as a scalar Gaussian. We instead bound $\|\vpsi_k(\vv)\|$ uniformly in $\vv$ and apply Cauchy--Schwarz.

\emph{Step 3: Uniform bound on $\|\vpsi_k(\vv)\|^2$.}
By Lemma~\ref{lem:log_lipschitz}, $|[\vDelta_k(\vv)]_t| \le |p_k^{\mY}(\vv_t) - p_k^{\mY'}(\vv_t)|/p_{\min}^{(\ell)}$. Since the generation terms cancel, $|p_k^{\mY}(\vv_t) - p_k^{\mY'}(\vv_t)| = |\Re\{[\vv_t]_k[\overline{\Delta\mY}\bar{\vv}_t]_k\}| \le V_{\max}^2\sqrt{d_{\max}}\|(\Delta\mY)_{k,:}\|_2$, so $\|\vDelta_k(\vv)\|_\infty \le d_\ell\|(\Delta\mY)_{k,:}\|_2$ uniformly in $\vv\in\gS_0^T$.
Applying the precision-sum bound:
\begin{align}
\|\vpsi_k(\vv)\|^2 &= \vDelta_k(\vv)^\top(\mSigma^{(\ell)})^{-1}\vDelta_k(\vv) \nonumber\\
&= \sum_{t,t'}[(\mSigma^{(\ell)})^{-1}]_{tt'}[\vDelta_k]_t[\vDelta_k]_{t'} \nonumber\\
&\le \|\vDelta_k(\vv)\|_\infty^2\,\gamma^{(\ell)} \le d_\ell^{\,2}\|(\Delta\mY)_{k,:}\|_2^2\,\gamma^{(\ell)}.
\label{eq:psi_bound}
\end{align}
Aggregating over buses with $\sum_{k\in\gC_\ell}\|(\Delta\mY)_{k,:}\|_2^2 \le (\kappa_{\mathrm{Kron}} r)^2$ (by Corollary~\ref{cor:kron_perturbation}) for all $\vv\in\gS_0^T$ it holds:
\begin{equation}
\label{eq:psi_agg_bound}
\sum_{k=1}^n \|\vpsi_k(\vv)\|^2 \le (\kappa_{\mathrm{Kron}} r)^2\sum_{\ell=1}^L d_\ell^{\,2}\,\gamma^{(\ell)} = \bar\psi^{\,2}.
\end{equation}

\emph{Step 4: Cauchy--Schwarz and $\chi^2$ tail.}
Summing~\eqref{eq:per_bus_whitened} $\forall k$:
\begin{equation}
\label{eq:lambdaI_split}
\Lambda_I = \underbrace{-\sum_k\vpsi_k(\vv)^\top\vz_k^{\mY}}_{=:\,Z(\vv)} + \underbrace{\tfrac{1}{2}\sum_k\|\vpsi_k(\vv)\|^2 - \sum_k\vpsi_k(\vv)^\top\vbeta^{(\ell_k)}}_{=:\,B_I(\vv)}.
\end{equation}

Let $\vz^{\mY} := (\vz_1^{\mY},\ldots,\vz_n^{\mY}) \in \R^{nT}$, so $\|\vz^{\mY}\|_2^2 \sim \chi^2_{nT}$ exactly. Cauchy--Schwarz on the stacked vectors gives
\begin{align}
|Z(\vv)| \le \sqrt{\textstyle\sum_k\|\vpsi_k(\vv)\|^2}\cdot\|\vz^{\mY}\|_2 \le \bar\psi\cdot\|\vz^{\mY}\|_2,
\label{eq:Z_CS}
\end{align}
using~\eqref{eq:psi_agg_bound}. The Laurent--Massart inequality~\cite{laurent2000adaptive} gives, for any $x>0$, $\Pr[\|\vz^{\mY}\|_2^2 \ge nT + 2\sqrt{nT\,x} + 2x] \le e^{-x}$. Setting $x=\log(1/\delta_R)$ yields $\|\vz^{\mY}\|_2 \le \tau(\delta_R)$ with probability at least $1-\delta_R$. On this event, $|Z(\vv)| \le \bar\psi\,\tau(\delta_R)$.

For the deterministic bias $B_I(\vv)$, the uniform bound~\eqref{eq:psi_agg_bound} gives $\tfrac{1}{2}\sum_k\|\vpsi_k(\vv)\|^2 \le \tfrac{1}{2}\bar\psi^{\,2}$. For the $\vbeta$-term, Cauchy--Schwarz over buses within each class ($\sum_{k\in\gC_\ell}\|(\Delta\mY)_{k,:}\|_2 \le \sqrt{|\gC_\ell|}\,\kappa_{\mathrm{Kron}} r$) combined with $\|\vpsi_k(\vv)\|\le d_\ell\|(\Delta\mY)_{k,:}\|_2\sqrt{\gamma^{(\ell)}}$ and $\|\vbeta^{(\ell)}\| = \sqrt{\vone^\top\mSigma^{(\ell)}\vone}$ gives
\begin{equation}
\Big|\sum_k\vpsi_k(\vv)^\top\vbeta^{(\ell_k)}\Big| \le \kappa_{\mathrm{Kron}} r\sum_\ell d_\ell\sqrt{\gamma^{(\ell)}|\gC_\ell|}\sqrt{\vone^\top\mSigma^{(\ell)}\vone}.
\label{eq:beta_bound_app}
\end{equation}

Combining yields~\eqref{eq:termI_bound}. \qed

\subsection{Proof of Proposition~\ref{prop:termII}}
\label{app:termII_proof}

\begin{proof}
\emph{Wirtinger factorization.} The factorization $\mJ = \mD(\vv)\tilde{\mM}$ with $\mD(\vv)$ as in~\eqref{eq:Dv_def} yields~\eqref{eq:Mtilde_def}. From~\eqref{eq:DG_ift}, at any $\vv\in\gS_0\cap\gM_\mY$, $DG_\mY(G_\mY^{-1}(\vv)) = -(\mJ^{\mathrm{eff}}_{\tilde F_\mY}(\vv))^{-1}\mR$ and similarly for $\mY'$. With $\Delta\mJ^{\mathrm{eff}} := \mJ^{\mathrm{eff}}_{\tilde F_{\mY'}} - \mJ^{\mathrm{eff}}_{\tilde F_\mY}$, the Neumann identity gives $(\mJ^{\mathrm{eff}}_{\tilde F_{\mY'}})^{-1} = (\mI + \mK)^{-1}(\mJ^{\mathrm{eff}}_{\tilde F_\mY})^{-1}$ with $\mK := (\mJ^{\mathrm{eff}}_{\tilde F_\mY})^{-1}\Delta\mJ^{\mathrm{eff}}$. Since $\mD(\vv)$ is independent of $\mY$, $\Delta\mJ^{\mathrm{eff}} = \mD(\vv)\Delta\tilde{\mM}$ (the volt-var contribution cancels), so
\begin{align}
\label{eq:K_def}
\mK &= \tilde{\mM}^{-1}\Delta\tilde{\mM}\in\C^{2n\times 2n}, \nonumber\\
DG_{\mY'}(G_{\mY'}^{-1}(\vv)) &= (\mI+\mK)^{-1} DG_\mY(G_\mY^{-1}(\vv)).
\end{align}

\emph{Reduction to an $n\times n$ compression.} From~\eqref{eq:K_def}, $DG_{\mY'}^\top DG_{\mY'} = DG_\mY^\top\mS^{-1}DG_\mY$ with $\mS := (\mI+\mK)(\mI+\mK)^\top$. Let $DG_\mY = \mU_G\mSigma_G\mV_G^\top$ be a thin SVD with $\mU_G\in\R^{2n\times n}$. Then $\det(DG_\mY^\top DG_\mY) = \det(\mSigma_G)^2$ and $\det(DG_{\mY'}^\top DG_{\mY'}) = \det(\mSigma_G)^2\det(\mU_G^\top\mS^{-1}\mU_G)$, so the $\det(\mSigma_G)^2$ factor cancels in the ratio:
\begin{align}
\label{eq:det_ratio_n}
\frac{\det(DG_{\mY'}^\top DG_{\mY'})}{\det(DG_\mY^\top DG_\mY)} &= \det(\mI_n - \mW), \nonumber\\
\mW &:= \mU_G^\top(\mI - \mS^{-1})\mU_G\in\C^{n\times n}.
\end{align}
As $|J_\mY(\vv)| =\sqrt{ \det(DG_\mY^\top DG_\mY)}$, Term~II per time step is
\begin{equation}
\label{eq:termII_per_step}
\Lambda_{II}^{(t)}(\vv) = \log\frac{|J_\mY(\vv)|}{|J_{\mY'}(\vv)|} = -\tfrac{1}{2}\log\det(\mI_n - \mW(\vv)).
\end{equation}

\emph{Matrix-log series on an $n\times n$ matrix.} For $k\ge 2$, Schur's inequality and the $\ell_k$--$\ell_2$ inequality give $|\mathrm{tr}(\mW^k)|\le\|\mW\|_F^k$; for $k=1$, $|\mathrm{tr}(\mW)|\le\sqrt{n}\,\|\mW\|_F$ by $\ell_1$--$\ell_2$. Let $\|\mW\|_F < 1$,
\begin{align}
|\log\det(\mI_n - \mW)| &\le \sum_{k=1}^\infty\tfrac{1}{k}|\mathrm{tr}(\mW^k)| \le \sqrt{n}\,\|\mW\|_F \nonumber\\
&+ \sum_{k=2}^\infty\tfrac{1}{k}\|\mW\|_F^k \le \frac{\sqrt{n}\,\|\mW\|_F}{1-\|\mW\|_F}. \label{eq:logdet_bound_n}
\end{align}

\emph{Bounding $\|\mW\|_F$.} Extending $\mU_G$ to a full orthogonal $\tilde\mU_G\in\C^{2n\times 2n}$, $\mU_G^\top\mA\,\mU_G$ is a principal $n\times n$ submatrix, so $\|\mW\|_F\le\|\mI - \mS^{-1}\|_F$. Writing $\mI - \mS^{-1} = \mS^{-1}(\mS - \mI)$ with $\mS - \mI = \mK + \mK^\top + \mK\mK^\top$ and using $\|\mK\|_{\mathrm{op}}\le\|\mK\|_F\le\alpha$, we have $\|\mS - \mI\|_F \le 2\alpha + \alpha^2$; combined with $\|\mS^{-1}\|_{\mathrm{op}}\le(1-\alpha)^{-2}$ (since $\sigma_{\min}(\mS)\ge(1-\alpha)^2$),
\begin{equation}
\label{eq:W_bound}
\|\mW\|_F \le \|\mI - \mS^{-1}\|_F \le \frac{\alpha(2+\alpha)}{(1-\alpha)^2}.
\end{equation}
A direct calculation shows $\alpha(2+\alpha)/(1-\alpha)^2 < 1\Leftrightarrow 4\alpha < 1$, so $\alpha < 1/4$ guarantees $\|\mW\|_F < 1$ and permits~\eqref{eq:logdet_bound_n}. The bound $\|\mK\|_F\le\alpha$ follows since $\|\Delta\tilde{\mM}\|_F\le C(\vv)\|\Delta\mY\|_F$ with $C(\vv) = \sqrt{2}(1+\|\vv\|/V_{\min})$, giving $\|\mK\|_F\le\|\tilde{\mM}^{-1}\|_\star C_\star\,\kappa_{\mathrm{Kron}} r = \alpha$ after taking the worst case over $\vv\in\gS_0$ and applying Corollary~\ref{cor:kron_perturbation}.

\emph{Combining.} Plugging~\eqref{eq:W_bound} into~\eqref{eq:logdet_bound_n} and using $1 - \|\mW\|_F\ge(1-4\alpha)/(1-\alpha)^2$ gives $|\log\det(\mI_n - \mW)| \le \sqrt{n}\,\alpha(2+\alpha)/(1-4\alpha)$. By~\eqref{eq:termII_per_step}, $|\Lambda_{II}^{(t)}|\le\tfrac{1}{2}\sqrt{n}\,\alpha(2+\alpha)/(1-4\alpha)$ per time step; summing over $T$ steps yields~\eqref{eq:termII_trajectory}.
\end{proof}

\subsection{Proof of Theorem~\ref{thm:approx_dp}}
\label{app:thm_approx_proof}
\begin{proof}
By the LLR decomposition~\eqref{eq:llr_decomp}, $\Lambda = \Lambda_I + \Lambda_{II}$. Under $\alpha < 1/4$, Proposition~\ref{prop:termII} gives $|\Lambda_{II}| \le \bar{\Lambda}_{\mathrm{II}} = T\sqrt{n}\,\alpha(2+\alpha)/(2(1-4\alpha))$ deterministically. By Proposition~\ref{prop:termI}, with probability at least $1-\delta$ over the DP-GMM,
\[
|\Lambda_I| \le \bar\psi\,\tau(\delta) + \tfrac{1}{2}\bar\psi^{\,2} + \kappa_{\mathrm{Kron}}\,r\sum_{\ell=1}^L d_\ell\sqrt{\gamma^{(\ell)}|\gC_\ell|}\sqrt{\vone^\top\mSigma^{(\ell)}\vone}.
\]
Combining, on the same event of probability $\ge 1-\delta$,
\[
|\Lambda| \le \bar\psi\,\tau(\delta) + B,
\]
where $B := T\sqrt{n}\,\alpha(2+\alpha)/(2(1-4\alpha)) + \tfrac{1}{2}\bar\psi^{\,2} + \kappa_{\mathrm{Kron}}\,r\sum_{\ell=1}^L d_\ell\sqrt{\gamma^{(\ell)}|\gC_\ell|}\sqrt{\vone^\top\mSigma^{(\ell)}\vone}$ is entirely deterministic. Setting $\varepsilon = B + \bar\psi\,\tau(\delta)$ establishes $\Pr[|\Lambda|>\varepsilon]\le\delta$, i.e.\ $(\varepsilon,\delta)$-PDP by Definition~\ref{def:pdp}.
\end{proof}

\subsection{Closed-Form Bound on $\|\tilde{\mM}^{-1}\|_\star$}
\label{app:Mtilde_closed_form_proof}

Recall from Section~\ref{sec:termII} that $\|\tilde{\mM}^{-1}\|_\star := \sup_{\vv\in\gS_0,\,\mY'\in\mathcal{N}_r(\mY_{\mathrm{full}})} \|\tilde{\mM}(\vv,\mY')^{-1}\|_{\mathrm{op}}$. The following remark gives an analytical upper bound on this worst-case quantity in terms of network parameters known to the utility.

\begin{remark}
\label{rem:Mtilde_closed_form}
Define the voltage deviation bound $\Delta_\infty := \max(V_{\max}-1,\,1-V_{\min})$ and the quantities
\begin{align}
\label{eq:sigma_M_flat}
\sigma_{\tilde{\mM}} &:= \sigma_{\min}(\mY) - \|\mY\vone+\vb\|_\infty, \\
\label{eq:C3_def}
C_3 &:= \frac{\|\mY\|_{\infty\to\infty} + \|\mY\vone+\vb\|_\infty}{V_{\min}}, \\
\label{eq:Cvv_def}
C_{\mathrm{vv}} &:= \frac{h_{\max}^g}{V_{\min}}\max_{k\in\gG}\gamma_k\,\|\phi_k'\|_\infty,
\end{align}
where $\|\mY\|_{\infty\to\infty} := \max_k\sum_j|\mY_{kj}|$ is the row-sum norm, $h_{\max}^g := \sup_{\tau\in[T]} h_\tau^g$ is the peak irradiance, and $\|\phi_k'\|_\infty := \sup_{u\in[V_{\min},V_{\max}]}|\phi_k'(u)|$ is the maximum volt-var slope at bus $k$. Without volt-var coupling,
\begin{align}
\label{eq:Mtilde_inv_bound}
\|\tilde{\mM}^{-1}\|_\star \le \frac{1}{\sigma_{\tilde{\mM}} - C_3\,\Delta_\infty - C_\star\,\kappa_{\mathrm{Kron}}\,r},
\end{align}
and with volt-var coupling, the same bound holds with $\sigma_{\tilde{\mM}}$ replaced by $\sigma_{\tilde{\mM}} - C_{\mathrm{vv}}$, i.e.,
\begin{align}
\label{eq:Mtilde_eff_inv_bound}
\|\tilde{\mM}_{\mathrm{eff}}^{-1}\|_\star \le \frac{1}{\sigma_{\tilde{\mM}} - C_3\,\Delta_\infty - C_{\mathrm{vv}} - C_\star\,\kappa_{\mathrm{Kron}}\,r},
\end{align}
provided the respective denominators are positive.
\end{remark}

\begin{proof}At $\vv=\vone$, the normalized Jacobian decomposes as $\tilde{\mM}(\vone,\mY) = \tilde{\mM}_0 + \tilde{\mM}_1$, where
\begin{align*}
\tilde{\mM}_0 &:= \begin{pmatrix} \vzero & \bar{\mY} \\ \mY & \vzero \end{pmatrix}, \\
\tilde{\mM}_1 &:= \begin{pmatrix} \diag(\bar{\mY}\vone+\bar{\vb}) & \vzero \\ \vzero & \diag(\mY\vone+\vb) \end{pmatrix}.
\end{align*}
The block anti-diagonal matrix $\tilde{\mM}_0$ has singular values equal to those of $\mY$ (each with multiplicity two), so $\sigma_{\min}(\tilde{\mM}_0) = \sigma_{\min}(\mY)$. The block diagonal matrix $\tilde{\mM}_1$ satisfies $\|\tilde{\mM}_1\|_{\mathrm{op}} = \|\mY\vone+\vb\|_\infty$. By Weyl's inequality,
\begin{align}
\label{eq:sigma_Mtilde_flat}
\sigma_{\min}\!\bigl(\tilde{\mM}(\vone,\mY)\bigr) &\ge \sigma_{\min}(\mY) - \|\mY\vone+\vb\|_\infty \nonumber\\
&= \sigma_{\tilde{\mM}}.
\end{align}
At a general $\vv\in\gS_0$, only the diagonal blocks of $\tilde{\mM}$ change; entry $i$ of the $(1,1)$-block changes from $(\bar{\mY}\vone+\bar{\vb})_i$ to $(\bar{\mY}\bar{\vv}+\bar{\vb})_i/v_i$, and the difference satisfies
\begin{equation}
\biggl|\frac{(\bar{\mY}\bar{\vv}+\bar{\vb})_i}{v_i} - (\bar{\mY}\vone+\bar{\vb})_i\biggr| \le C_3\,\|\vv-\vone\|_\infty,
\end{equation}
with $C_3$ as in~\eqref{eq:C3_def}. This bounds the maximum diagonal entry, so $\|\tilde{\mM}(\vv,\mY)-\tilde{\mM}(\vone,\mY)\|_{\mathrm{op}} \le C_3\,\Delta_\infty$. For $\mY_{\mathrm{full}}'\in\mathcal{N}_r(\mY_{\mathrm{full}})$, Corollary~\ref{cor:kron_perturbation} gives $\|\mY-\mY'\|_F < \kappa_{\mathrm{Kron}} r$, so $\|\tilde{\mM}(\vv,\mY)-\tilde{\mM}(\vv,\mY')\|_{\mathrm{op}} \le C_\star\,\kappa_{\mathrm{Kron}}\,r$. Three applications of Weyl's perturbation theorem give
\begin{equation}
\sigma_{\min}\!\bigl(\tilde{\mM}(\vv,\mY')\bigr) \ge \sigma_{\tilde{\mM}} - C_3\,\Delta_\infty - C_\star\,\kappa_{\mathrm{Kron}}\,r,
\end{equation}
and since $\|\tilde{\mM}^{-1}\|_\star = 1/\inf_{\vv,\mY'}\sigma_{\min}(\tilde{\mM}(\vv,\mY'))$, the bound~\eqref{eq:Mtilde_inv_bound} follows.

\emph{Volt-var coupling bound.} With volt-var coupling, $\tilde{\mM}_{\mathrm{eff}} = \tilde{\mM} - \mD(\vv)^{-1}\mJ_{\Gamma\vh}$. Since $\mJ_{\Gamma\vh}$ is block-diagonal (each PV bus couples only to itself through the volt-var curve), $\mD(\vv)^{-1}\mJ_{\Gamma\vh}$ is also block-diagonal with entries bounded, for generation bus $k\in\gG$, by
\begin{equation}
\frac{\gamma_k h_t^g |\phi_k'(|v_k|)|}{|v_k|} \le \frac{\gamma_k h_{\max}^g \|\phi_k'\|_\infty}{V_{\min}}.
\end{equation}
Therefore $\|\mD(\vv)^{-1}\mJ_{\Gamma\vh}\|_{\mathrm{op}} \le C_{\mathrm{vv}}$ with $C_{\mathrm{vv}}$ as in~\eqref{eq:Cvv_def}. By one further application of Weyl's inequality, $\sigma_{\min}(\tilde{\mM}_{\mathrm{eff}}) \ge \sigma_{\min}(\tilde{\mM}) - C_{\mathrm{vv}}$.
\end{proof}

\subsection{Sensitivity Derivations for Baseline Mechanisms}
\label{app:sensitivity_proofs}

We derive the per-sample $\ell_2$ sensitivities reported in Table~\ref{tab:sensitivities}. For both derivations we use the following fact: the Wirtinger factorization $\mJ = \mD(\vv)\tilde{\mM}$ (Section~\ref{sec:termII}) gives
\begin{equation}
\label{eq:Jinv_Mtilde}
\|\mJ_{\tilde{F}_\mY}(\vv)^{-1}\|_{\mathrm{op}} \le \frac{\|\tilde{\mM}^{-1}\|_\star}{V_{\min}}.
\end{equation}

\begin{proof}[Derivation of $\Delta_2^{(\mY)}$]
Fix a voltage trajectory $\vv_t\in\gS_0$ and consider two $r$-adjacent full admittance matrices $\mY_{\mathrm{full}},\mY_{\mathrm{full}}'$ with $\|\mY_{\mathrm{full}}-\mY_{\mathrm{full}}'\|_F < r$. By Corollary~\ref{cor:kron_perturbation}, the induced reduced perturbation satisfies $\|\Delta\mY\|_F < \kappa_{\mathrm{Kron}} r$. By the implicit function theorem,
\begin{equation}
\vv_t(\mY) - \vv_t(\mY') \approx \mJ_{\tilde{F}_\mY}(\vv_t)^{-1}\bigl(\tilde{F}_{\mY}(\vv_t) - \tilde{F}_{\mY'}(\vv_t)\bigr),
\end{equation}
and the injection perturbation satisfies $\|\tilde{F}_{\mY}(\vv_t) - \tilde{F}_{\mY'}(\vv_t)\|_2 \le V_{\max}^2\sqrt{n}\,\|\Delta\mY\|_F \le V_{\max}^2\sqrt{n}\,\kappa_{\mathrm{Kron}}\,r$. Applying the Jacobian bound $\|\mJ^{-1}\|_{\mathrm{op}} \le \|\tilde{\mM}^{-1}\|_\star / V_{\min}$ from~\eqref{eq:Jinv_Mtilde} yields~\eqref{eq:sens_voltage_Y}.
\end{proof}

\begin{proof}[Derivation of $\Delta_2^{(\mathrm{load}\to v)}$]
A single user's load change perturbs the real-representation injection vector by at most $\sqrt{2}\,\Delta_{\mathrm{load}}$ in $\ell_2$ norm (the factor $\sqrt{2}$ accounts for real and imaginary parts). By the implicit function theorem, $\|\delta\vv\|_2 \le \|\mJ^{-1}\|_{\mathrm{op}}\,\sqrt{2}\,\Delta_{\mathrm{load}}$. Applying~\eqref{eq:Jinv_Mtilde} gives~\eqref{eq:sens_voltage_load}.
\end{proof}

\end{appendix}

\end{document}

%% file: math_commands.tex
\usepackage{xspace}

\newcommand{\norm}[1]{\left\lVert#1\right\rVert}
\newcommand{\abs}[1]{\left|#1\right|}

\DeclareMathOperator{\diag}{diag}

\newcommand{\C}{\mathbb{C}}

\newcommand{\Real}{\mathrm{Re}}
\newcommand{\Imag}{\mathrm{Im}}

\usepackage{tikz}
\usepackage[normalem]{ulem}

\usepackage{amsmath,amsfonts,bm}
\usepackage{amssymb}
\usepackage{centernot}
\usepackage{enumitem}

\usepackage{amsthm}
\theoremstyle{plain}
\newtheorem{remark}{Remark}
\newtheorem{lemma}{Lemma}
\newtheorem{assumption}{Assumption}

\newtheorem{proposition}{Proposition}
\newtheorem{corollary}{Corollary}

\newtheorem{theorem}{Theorem}
\newtheorem{definition}{Definition}

\def\1{\bm{1}}

\def\vDelta{{\bm{\Delta}}} 

\def\vbeta{{\bm{\beta}}}
\def\vxi{{\bm{\xi}}}
\def\vzero{{\bm{0}}}
\def\vone{{\bm{1}}}
\def\vmu{{\bm{\mu}}}

\def\vb{{\bm{b}}}

\def\vh{{\bm{h}}}

\def\vp{{\bm{p}}}
\def\vq{{\bm{q}}}

\def\vs{{\bm{s}}}

\def\vv{{\bm{v}}}

\def\vz{{\bm{z}}}

\def\mA{{\bm{A}}}

\def\mD{{\bm{D}}}

\def\mI{{\bm{I}}}
\def\mJ{{\bm{J}}}
\def\mK{{\bm{K}}}

\def\mM{{\bm{M}}}

\def\mR{{\bm{R}}}
\def\mS{{\bm{S}}}

\def\mU{{\bm{U}}}
\def\mV{{\bm{V}}}
\def\mW{{\bm{W}}}

\def\mY{{\bm{Y}}}

\def\mPhi{{\bm{\Phi}}}

\def\mSigma{{\bm{\Sigma}}}
\def\mTheta{{\bm{\Theta}}}

\DeclareMathAlphabet{\mathsfit}{\encodingdefault}{\sfdefault}{m}{sl}
\SetMathAlphabet{\mathsfit}{bold}{\encodingdefault}{\sfdefault}{bx}{n}

\def\gC{{\mathcal{C}}}

\def\gG{{\mathcal{G}}}

\def\gM{{\mathcal{M}}}
\def\gN{{\mathcal{N}}}

\def\gR{{\mathcal{R}}}
\def\gS{{\mathcal{S}}}

\def\gY{{\mathcal{Y}}}
\def\gZ{{\mathcal{Z}}}

\newcommand{\R}{\mathbb{R}}

